\newcommand{\pv}[1]{\ensuremath{pv(#1)}}
\newcommand{\chrom}[1]{\ensuremath{\chi_{p}(#1)}}
\newcommand{\umax}[1]{\ensuremath{\chi_{um}(#1)}}
\newdimen\longformulasindent
\newenvironment{longformulas}
{\global\longformulasindent=0pt
	\def\>{\global\advance\longformulasindent2em\relax\hspace{2em}}%
	\def\<{\global\advance\longformulasindent-2em\relax\hspace{-2em}}%
	\begin{array}{@{}>{\displaystyle\hspace{\longformulasindent}}l@{}}}
	{\end{array}}
\theoremstyle{plain}
\newtheorem{thm}{Theorem}
\newtheorem{lemma}[thm]{Lemma}
\newtheorem{claim}[thm]{Claim}
\theoremstyle{definition}
\newtheorem{defn}[thm]{Definition}
\newtheorem{?}[thm]{Problem}
\theoremstyle{plain}
\newtheorem{prop}{Proposition}
\author{ Jan Soukup \thanks{This research was supported by the Czech Science Foundation grant GA19-08554S}
	\\
	\normalsize{Faculty of Mathematics and Physics,}\\
	\normalsize{Charles University}\\
	\normalsize{Ke Karlovu 2027/3, 121 16 Praha 2, Czech Republic}\\
	\\
	\normalsize{E-mail: soukja2@seznam.cz}
}
\title{Improved lower bounds on parity vertex colourings of binary trees}
\date{}
\begin{document}

\maketitle

\begin{abstract}
	A vertex colouring is called a \emph{parity vertex colouring} if every path in $G$ contains an odd number of occurrences of some colour. Let $\chrom{G}$ be the minimal number of colours in a parity vertex colouring of $G$. We show that $\chrom{B^*} \ge \sqrt{d} + \frac{1}{4} \log_2(d) - \frac{1}{2}$ where $B^*$ is a subdivision of the complete binary tree $B_d$. This improves the previously known bound $\chrom{B^*} \ge \sqrt{d}$ and enhances the techniques used for proving lower bounds. We use this result to show that $\chrom{T} > \sqrt[3]{\log{n}}$ where $T$ is any binary tree with $n$ vertices. These lower bounds are also lower bounds for the conflict-free colouring. We also prove that $\chrom{G}$ is not monotone with respect to minors and determine its value for cycles. 
	
	Furthermore, we study complexity of computing the parity vertex chromatic number $\chrom{G}$. We show that checking whether a vertex colouring is a parity vertex colouring is \hbox{coNP-complete}. Then we use Courcelle's theorem to prove that the problem of checking whether $\chrom{G} \le k$ is fixed-parameter tractable with respect $k$ and the treewidth of $G$. 
\end{abstract}

\section*{Introduction}
	
	\begin{defn}
		Let $G$ be a graph. A vertex colouring $c: V(G) \rightarrow \{1,\dots, k\}$ is called a \emph{parity vertex colouring} if every path in $G$ contains an odd number of occurrences of some colour.  The \emph{parity vertex chromatic number} of $G$ (denoted by $\chrom{G}$) is the minimal number of colours in a parity vertex colouring of $G$.
	\end{defn}
	
	This colouring was independently introduced by \citet{UMax} and \citet{Borow}. \citet{UMax} introduced it as a relaxation of the conflict-free and the unique-maximum colouring, motivated by the fact that this relaxation is useful for proving lower bounds on the minimal number of colours in these other colourings. A \emph{conflict-free colouring} (in the literature also known as a conflict-free colouring of hypergraphs with respect to paths) is a colouring of vertices of $G$ such that for every path $P$ in $G$ there exists a colour used exactly once on $P$. The main application of the conflict-free colouring is in frequency assignment for cellular networks. The conflict-free colouring was studied in \cite{UMax, NPcom}. 
	
	The conflict-free colouring is a relaxation of a unique-maximum colouring. A \emph{unique-maximum colouring} (in the literature also known as a unique-maximum colouring of hypergraphs with respect to paths or alternatively as a vertex ranking) is a colouring of vertices of $G$ such that for every path $P$ in $G$ the maximum colour used on $P$ is used exactly once on $P$. We denote the minimal number of colours used in a colouring of $G$ by unique-maximum chromatic number $\umax{G}$. This colouring has many applications including sparse Cholesky factorization \cite{Cholesky} or VLSI layout \cite{VLSI}. Theoretical and algorithmic properties of this colouring were studied in many papers, see e.g. \cite{Bodlaender95, UMax, Borow}. Note that the parameter $\umax{G}$ is equivalent with the height of the minimum height elimination tree of $G$, see e.g \cite{Boundontrees} for details. It is also equivalent with the tree-depth of $G$, see \cite{NESETRIL20061022} for details.
	
	\citet{Borow} began the study of the parity vertex colouring inspired by the work on the edge variant of this problem. The study of the parity edge colouring was initiated by \citet{Edge} and continued by \citet{Edge2}. It was motivated by the fact that this colouring is closely related to the problem of deciding whether a graph embeds in the hypercube and the fact that the hypercube is one of the most popular architectures used for parallel computations \cite{hypercube}.
	
	Recently, several other related colourings were introduced. Namely vertex and edge variants of the conflict-free connection colouring and the parity connection colouring of graphs (the parity versions are sometimes called odd connection colourings). The \emph{parity vertex (edge) connection colouring} is defined as a vertex (edge) colouring of a graph such that between every pair of distinct vertices, there exists a path having an odd number of occurrences of some colour on its vertices (edges). In comparison, in parity vertex (edge) colourings every path has this property. The conflict-free versions are defined analogously. The vertex variants were studied by \citet{Vertexcon} and \citet{Vertexcon2}. The edge variants were  studied by \citet{ConflictFreeConnectionsofGraphs} and \citet{ConflictfreeconnecttionCHANG2018}.
	
	Note that every parity vertex colouring is a parity vertex connection colouring but the converse is not generally true. But in a tree, there is only one path between each pair of vertices so these colourings are the same. The similar property clearly holds for the conflict-free and the edge versions of these colourings as well.      
	
	\section*{Our work and comparison to known results}
	
	In the first section we determine the parity vertex chromatic number of cycles and we show that the parity vertex chromatic number is not monotone under the minor relation. 
	
	In the second and third section we focus on binary trees. Thus all the results in this section also hold for the parity vertex connection colouring. We improve the lower bound of \citet{UMax} on the parity vertex chromatic number of subdivisions of complete binary trees. Let $B_d$ be the complete binary tree with $d$ layers and let $B^*$ be obtained from $B_d$ by replacing edges with paths. They proved that $\chrom{B^*} \ge \sqrt{d}$. We show that $\chrom{B^*} \ge \sqrt{d} + \frac{1}{4} \log_2(d) - \frac{1}{2}$. Next, we use this bound to obtain a new bound $\chrom{T} > \sqrt[3]{\log{n}}$ where $T$ is an arbitrary binary tree on $n$ vertices. Note that these lower bounds are also lower bounds on the conflict-free chromatic number.
	
	For comparison, let us discuss the known upper bounds on the parity vertex chromatic number of trees. It is easy to see that for every tree we can always find a vertex whose removal leaves parts of at most half the size. Thus, we can colour this vertex by a unique colour and colour the remaining parts recursively. This approach yields the bound $\umax{T}\le \lfloor \log{n} \rfloor + 1$ where $T$ is a tree on $n$ vertices; see \cite{Boundontrees} for details. Thus, $\chrom{T}\le \lfloor \log{n} \rfloor + 1$ because every unique-maximum colouring is also a parity vertex colouring. There are better bounds known for some other type of trees. In particular, \citet{Gregor} showed that for a binomial tree $\mathrm{Bi}_d$ it holds that $\chrom{\mathrm{Bi}_d} \le \lceil \frac{2d+3}{3} \rceil$. Since $\mathrm{Bi}_d$ has $2^d$ vertices, it follows that $\chrom{\mathrm{Bi}_d} \le \lceil \frac{2\log(n)+3}{3} \rceil$, where $n$ is the number of vertices of $\mathrm{Bi}_d$. \citet{UMax} showed a similar bound for complete binary trees. Thus even for complete binary trees and binomial trees, there is still almost a quadratic gap between the best lower and the best upper bound on the parity vertex chromatic number.    
	
	In the fourth section we study complexity of computing the parity vertex chromatic number. We start with applying the ideas of \citet{NPcom} to prove that the problem of checking whether a given colouring is a parity vertex colouring is coNP-complete. Next we use Courcelle's meta-theorem to show that computing the parity vertex chromatic number of a graphs with bounded treewidth can be done in linear time with respect to the size of the graph when we consider the number of colours to be constant.
	
	\section*{Prerequisites}	
	Throughout the paper we consider all graphs to be finite, undirected and simple. The sets of vertices and edges of a graph $G=(V,E)$ are $V(G)$, $E(G)$ respectively. Vertex colouring of a graph $G$ is a function $f:V(G) \rightarrow \mathbb{N}$. $B_d$ denotes the complete binary tree with $d$ layers. We now define useful tools to work with the parity vertex colouring. 
		
\begin{defn}
	Let $G$ be a graph, $c: V(G) \rightarrow \{1,\dots, k\}$ be a vertex colouring of $G$ with $k$ colours and $V$ be a subset of vertices of $G$. A \emph{parity vector} of $V$, denoted by $\pv{V}$,  is an element of $\{0,1\}^k$ where the $i$-th coordinate equals the parity of the number of vertices in $V$ coloured by $i$.
	
	A non-empty path $P$ in $G$ is called a \emph{parity path} if $\pv{V(P)}$ is the zero parity vector.
\end{defn}

Note that every parity path has odd length (even number of vertices). Also, a vertex colouring is a parity vertex colouring if it does not contain any parity path.
We are working mainly with the parity vertex colourings. Thus, for brevity, by a properly coloured tree we mean a tree coloured by a parity vertex colouring unless it is said otherwise.

\section{Cycles and non-monotone properties under minors}
\citet{UMax} determined the parity vertex chromatic number of paths.
\begin{lemma} [\protect{\cite{Borow,UMax}}]\label{lemma1}
	For every $n \ge 1$, $\chrom{P_{n}} = \lfloor \log{n} \rfloor + 1$.
\end{lemma}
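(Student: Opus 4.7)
I would establish the upper and lower bounds separately.

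For the upper bound, the plan is to construct a parity vertex colouring of $P_n$ using exactly $\lfloor \log n \rfloor + 1$ colours by the standard recursive median construction: colour the (say, left) middle vertex of $P_n$ with a fresh top colour, then recursively colour the two remaining sub-paths, each of at most $\lfloor n/2 \rfloor$ vertices, using the colours $1,\dots, \lfloor \log n \rfloor$. An easy induction shows this produces a unique-maximum colouring (on every sub-path the vertex whose index is the global median of that sub-path receives the unique highest colour among the sub-path), and every unique-maximum colouring is automatically a parity vertex colouring because a colour used exactly once is used an odd number of times.

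For the lower bound, my plan is a prefix parity-vector counting argument. Fix any vertex colouring $c$ of $P_n = v_1 v_2 \cdots v_n$ with $k$ colours, and consider the $n+1$ prefixes $V_i := \{v_1,\dots,v_i\}$ for $i = 0,\dots, n$ together with their parity vectors $\pv{V_i} \in \{0,1\}^k$. The key observation is that if $\pv{V_i} = \pv{V_j}$ for some $0 \le i < j \le n$, then their coordinate-wise XOR is the parity vector of the non-empty sub-path $v_{i+1}\cdots v_j$, making it a parity path. Hence any parity vertex colouring forces all $n+1$ parity vectors to be distinct, yielding $2^k \ge n+1$, i.e.\ $k \ge \lceil \log_2(n+1) \rceil$.

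To close the gap I would invoke the elementary identity $\lceil \log_2(n+1) \rceil = \lfloor \log_2 n \rfloor + 1$ valid for every $n \ge 1$, which combined with the construction gives the claimed equality. I do not expect a genuine obstacle here; the only mildly delicate points are the base case $n = 1$ (which is immediate since a single vertex needs one colour) and the case split on whether $n$ is a power of two for the arithmetic identity. The conceptual content is really in the prefix-XOR trick, which is exactly the technique that will be generalised in the later binary-tree sections.
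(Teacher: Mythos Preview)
The paper does not actually supply its own proof of Lemma~\ref{lemma1}; it is quoted as a known result from \cite{Borow,UMax}. Your proposed argument is correct and is precisely the standard proof from those references: the recursive median (unique-maximum) colouring for the upper bound, and the prefix parity-vector pigeonhole for the lower bound. Indeed, the lower-bound technique you outline is exactly what the paper itself deploys in the very next lemma (Lemma~\ref{lemma2} on cycles), where the same ``distinct parity vectors of initial segments'' idea is extended to $2n-1$ subpaths, so your approach is fully in line with the paper's methodology.
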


We extend the proof to determine the parity vertex chromatic number of cycles. 
\begin{lemma}\label{lemma2}
	For every $n\ge 3$, $\chrom{C_{n}} = \lceil \log{n} \rceil + 1$.
\end{lemma}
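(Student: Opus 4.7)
The plan is to prove matching lower and upper bounds of $\lceil \log n \rceil + 1$.

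For the upper bound, I would pick any vertex $v \in V(C_n)$, give it a private colour not appearing elsewhere, and parity-colour the remaining path $C_n - v \cong P_{n-1}$ with $\lfloor \log(n-1) \rfloor + 1$ colours using Lemma \ref{lemma1}. Every path in $C_n$ either avoids $v$, in which case it inherits a colour of odd multiplicity from the path colouring, or contains $v$, in which case the unique colour at $v$ occurs exactly once. Since $\lfloor \log(n-1) \rfloor + 2 = \lceil \log n \rceil + 1$ for every $n \ge 2$, this gives the required number of colours.

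For the lower bound, suppose $c$ is a parity vertex colouring of $C_n$ with $k$ colours. Label the vertices $v_1, \ldots, v_n$ cyclically and define prefix parity vectors $q_i = \pv{\{v_1, \ldots, v_i\}} \in \{0,1\}^k$ for $0 \le i \le n$, with addition componentwise mod~$2$. A non-wrapping path $v_{i+1}, \ldots, v_j$ has parity $q_j + q_i$, so avoidance of parity paths forces $q_0, \ldots, q_n$ to be pairwise distinct, exactly as in the proof of Lemma \ref{lemma1}. A wrapping path $v_a, \ldots, v_n, v_1, \ldots, v_b$ with $1 \le b < a \le n$ has vertex set $\{v_1,\ldots,v_b\} \cup \{v_a,\ldots,v_n\}$ and hence parity $q_b + q_{a-1} + q_n$. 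Demanding each of these to be nonzero will yield that the set $Q = \{q_0, \ldots, q_{n-1}\}$ is disjoint from its translate $Q + q_n$. Since $|Q| = n$ and both $Q$ and $Q+q_n$ lie in $\{0,1\}^k$, this gives $2n \le 2^k$, i.e.\ $k \ge \lceil \log n \rceil + 1$.

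The main subtlety will be verifying the disjointness $Q \cap (Q + q_n) = \emptyset$ carefully: the wrapping-path inequalities directly give $q_b \ne q_{a-1} + q_n$ only for $1 \le b < a \le n$, and the residual cases (one of the indices being $0$, or both indices coinciding) must be reduced to the already-established pairwise distinctness of $q_0, \ldots, q_n$ together with $q_n \ne q_0 = 0$ (the latter coming from the fact that a Hamiltonian path in $C_n$ is itself a path and hence cannot be a parity path). Once this bookkeeping is in place, the cardinality count closes the argument and matches the upper bound.
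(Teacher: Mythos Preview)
Your proposal is correct and essentially matches the paper's argument. The upper bound is identical (give one vertex a fresh colour and colour the rest as a path). For the lower bound the paper chooses the $2n-1$ subpaths $S_1,\dots,S_{2n-1}$ consisting of all prefixes $\{v_1,\dots,v_i\}$ and all suffixes $\{v_{n-i+1},\dots,v_n\}$, observes that every symmetric difference of two of them is again a subpath of the cycle, and concludes their parity vectors are distinct and nonzero; this is exactly your set $Q\cup(Q+q_n)$ with the zero vector removed, so the two packagings give the same inequality $2n\le 2^k$.
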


\begin{proof}
	To see the upper bound, we colour $n-1$  consecutive vertices on the cycle optimally as on a path (by Lemma \ref{lemma1}), and we colour the remaining vertex by a new colour. This colouring uses exactly $\lfloor \log{(n-1)} \rfloor + 1 +1 $ colours. By integrality this is equal to $ \lceil \log{n} \rceil + 1$.
	
	To prove the lower bound, consider any parity vertex colouring of $C_n$ using $k$ colours. Denote the  vertices of the cycle in the order of traversal by $v_{1}, \dots, v_{n}$. For every $i \in [n]$, denote the subpath on the first $i$ vertices by $S_{i}$. Next, for every $i \in [n-1]$, denote by $S_{i+n}$, the subpath on vertices $v_{n-i+1}, \dots, v_{n}$. Observe that all subpaths $S_{i}$ are different and that the symmetric difference of any two of them (i.e. we consider only vertices contained in exactly one of them) is also a subpath of the cycle. Thus, parity vectors of all $2n-1$ paths $S_{i}$ must be different and non-zero, otherwise there would exist a parity path in $G$. Hence $2n -1 \le 2^{k} -1$. Therefore $k\ge \lceil \log{n} \rceil + 1$.
\end{proof}

The parity vertex chromatic number is monotone with respect to subgraphs because the set of all paths in a subgraph is a subset of the set of all paths in the original graph. We show that it is not monotone with respect to minors by providing a counter-example. Recall that $B_d$ is the complete binary tree with $d$ layers.

\begin{lemma}\label{lemma3}
	$\chrom{B_{4}} = 3$ .
\end{lemma}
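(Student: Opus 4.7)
The plan is to prove $\chrom{B_4}\ge 3$ and $\chrom{B_4}\le 3$ separately. For the lower bound, I would observe that $B_4$ contains $P_7$ as a subgraph: the unique path between two leaves in opposite subtrees of the root visits $7$ vertices. Any parity vertex colouring of $B_4$ restricts to a parity vertex colouring of this subpath, so Lemma~\ref{lemma1} gives $\chrom{B_4}\ge \chrom{P_7}=\lfloor\log 7\rfloor+1=3$.

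For the upper bound, I would exhibit an explicit $3$-colouring. Before searching for one, a useful reduction is that the sum of the coordinates of the parity vector of any path equals its number of vertices, so a non-empty parity path must have an even number of vertices. Since the diameter of $B_4$ is $6$ edges, only paths on $2$, $4$, or $6$ vertices require verification, and in a proper colouring a $4$-vertex path has zero parity vector precisely when its colour sequence has the two-alternating form $xyxy$.

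A concrete colouring that I expect to work is the following: label the vertices as root $r$, its children $a,b$, grandchildren $a_i,b_i$, and leaves $a_{ij},b_{ij}$, and set $c(r)=3$, $c(a)=1$, $c(b)=2$, $c(a_i)=c(b_i)=3$, $c(a_{ij})=2$ for all leaves of the $a$-subtree, and $c(b_{ij})=1$ for all leaves of the $b$-subtree. The verification enumerates the few path types: the $2$-vertex check is immediate from this layer pattern; all $4$-vertex paths turn out to have parity vector $(1,1,2)$ up to permutation; and the only $6$-vertex paths are those of diameter $5$ crossing the root with depth $3$ on one side and depth $2$ on the other, and the two such types up to left--right symmetry give parity vectors $(1,2,3)$ and $(2,1,3)$.

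The main obstacle is constructing a working colouring at all, as naive symmetric choices fail. If $c(a)=c(b)$, the $4$-vertex through-root path $a_i$-$a$-$r$-$b$ already forms an $xyxy$ alternation; and even with $c(a)\ne c(b)$, colouring the leaves of both subtrees identically makes the $6$-vertex path $a_{ij}$-$a_i$-$a$-$r$-$b$-$b_k$ contain each colour exactly twice, producing the bad parity vector $(2,2,2)$. The colouring above breaks both obstructions simultaneously: $c(a)\ne c(b)$ handles the first, and the deliberate mismatch $c(a_{ij})\ne c(b_{ij})$ on the leaves forces every long crossing path to exhibit an odd-count colour.
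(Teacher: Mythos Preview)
Your argument is correct and follows the same strategy as the paper: the lower bound via the $P_7$ subgraph and Lemma~\ref{lemma1}, and the upper bound via an explicit $3$-colouring. The paper simply displays the colouring in a figure, whereas you supply the colouring in symbols together with a careful case analysis of the $2$-, $4$-, and $6$-vertex paths; the content is the same.
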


\begin{proof}
	The parity chromatic number of $B_4$ is at least $3$ because $B_{4}$ contains $P_{7}$ as a subgraph. And since there exists a parity vertex colouring with 3 colours, as you can see on Figure \ref{obr01:b4}, we have $\chrom{B_4} = 3$. 
\end{proof}

\begin{figure}[htbp]\centering
	\includegraphics[scale=1]{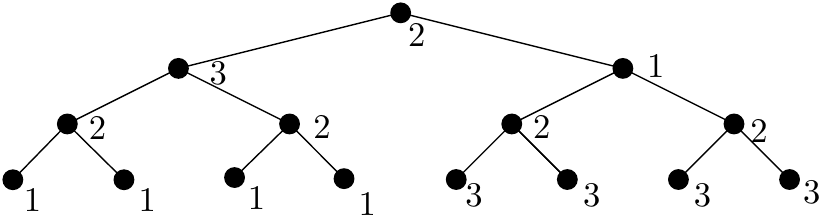}
	\caption{A parity vertex colouring of $B_{4}$ with three colours.}
	\label{obr01:b4}
\end{figure}

We denote the graph consisting of two copies of $B_{3}$ with their roots connected by $T_{3,3}$. We refer to these two rooted subtrees as the first and the second main subtree of $T_{3,3}$, see Figure \ref{obr02:t33}.
\begin{lemma} \label{lemma4}
	$\chrom{T_{3,3}} = 4.$
\end{lemma}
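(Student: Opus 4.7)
The upper bound $\chrom{T_{3,3}}\le 4$ follows immediately from the general bound $\chrom{T}\le \lfloor\log n\rfloor+1$ applied to the $14$-vertex tree $T_{3,3}$, or from exhibiting an explicit four-colouring. For the matching lower bound $\chrom{T_{3,3}}\ge 4$ I would argue by contradiction: assume $c$ is a parity vertex colouring with only three colours $\{1,2,3\}$, and regard parity vectors as elements of $\mathbb{F}_2^3$.

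The first step is the local observation that a $P_4$ with vertices $v_1v_2v_3v_4$ is a parity path exactly when $c(v_1)=c(v_3)$ and $c(v_2)=c(v_4)$, because the edge condition rules out the monochromatic case while any $P_4$ using all three colours has a colour of odd count. Apply this to the four-vertex paths $x_{ik}x_ir_1x_{3-i}$ inside the first main subtree, where $x_1,x_2$ are the children of the root $r_1$ and $x_{ik}$ are the grandchildren. The outcome is a dichotomy: either (L.A) $c(x_1)=c(x_2)$, in which case every leaf $x_{ij}$ is forced to the colour distinct from both $c(r_1)$ and $c(x_1)$, so that $\{c(r_1),c(x_1),c(x_{ij})\}=\{1,2,3\}$; or (L.B) $c(x_1)\neq c(x_2)$, whence $\{c(r_1),c(x_1),c(x_2)\}=\{1,2,3\}$ and each leaf only needs to avoid its parent. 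A symmetric dichotomy (R.A)/(R.B) holds for the second main subtree, and the plan is to eliminate all four combinations.

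Case (L.A)+(R.A) is cleanest: the cross $P_6$ given by $x_{ij}x_ir_1r_2y_jy_{jk}$ has parity vector $(1,1,1)+(1,1,1)=0$, because on each side the three colours each appear exactly once along the subpath, so it is a parity path. In the mixed cases (L.A)+(R.B) and (L.B)+(R.A) the same cross-path identity combined with the $P_4$ constraint at $r_1r_2y_jy_{jk}$ pins down every leaf colour on the case-B side and then forces $c(y_1)=c(y_2)$, contradicting (R.B). In (L.B)+(R.B) one fixes $(c(r_1),c(x_1),c(x_2))=(1,2,3)$ up to symmetry, splits on $c(r_2)\in\{2,3\}$, and uses the $P_4$ constraints at $x_{ik}x_ir_1r_2$ and its right-side mirror to pin down the leaf colours; a short computation then reveals a cross $P_6$ with colour sequence $(3,2,1,2,1,3)$ (or an analogue obtained by relabelling) whose parity vector vanishes in $\mathbb{F}_2^3$.

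The main obstacle is the case (L.B)+(R.B): although each sub-case reduces to a one-line check in $\mathbb{F}_2^3$, the bookkeeping across the choices of $c(r_2)\in\{2,3\}$ and of the assignment of $\{c(y_1),c(y_2)\}$ makes this part of the proof the longest, and care is required not to overlook any sub-case.
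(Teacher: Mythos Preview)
Your argument is correct, but it is organised differently from the paper's and ends up longer. The paper does not split on whether the two children of each root agree; instead it asks, for each main subtree, whether \emph{some leaf has a colour different from the root of that subtree}. If both subtrees have such a leaf, then each contains a root--to--leaf $P_3$ hitting all three colours, and joining these two paths through the edge $r_1r_2$ gives the parity $P_6$ immediately---this single observation absorbs your cases (L.A)+(R.A), (L.A)+(R.B), (L.B)+(R.A), and the bulk of (L.B)+(R.B). In the remaining situation one subtree has all four leaves coloured like its root; then the $P_4$ constraint forces the two middle vertices to take the two remaining colours, and a two-line check produces a parity $P_4$ starting at $r_2$. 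So the paper's decomposition yields a proof of a few lines, whereas your $2\times 2$ split requires tracking four combinations and, in (L.B)+(R.B), a further split on $c(r_2)$.

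Your route is still valid and has the merit of being entirely mechanical---each step is a parity-vector computation in $\mathbb{F}_2^3$. One simplification you overlooked: in (L.B)+(R.B) you do not need to pin down leaf colours or look at any $P_6$. With $c(r_1)=1$, $\{c(x_1),c(x_2)\}=\{2,3\}$ and $\{c(r_2),c(y_1),c(y_2)\}=\{1,2,3\}$, choose $i$ with $c(x_i)=c(r_2)$ and $j$ with $c(y_j)=1$; then the four-vertex path $x_i\,r_1\,r_2\,y_j$ has colour pattern $c(r_2),1,c(r_2),1$ and is already a parity path. This collapses the case you flagged as the main obstacle to a single line.
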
	
\begin{proof}
	First of all, there exists a proper colouring with 4 colours, as you can see on Figure \ref{obr02:t33}. And since the tree $T_{3,3}$ contains $P_{6}$ as a subgraph, it is sufficient to prove that there is no parity vertex colouring with 3 colours.
	
	Suppose that a parity vertex colouring with $3$ colours exists.	
	At first assume that in both main subtrees there exists a leaf that has a different colour than the root of the respective subtree. In both subtrees the vertex between the respective root and the leaf has a different colour than both of them. Thus in both subtrees there exists a path starting in the root and using every colour once. But this contradicts our assumption because the roots are connected. Thus there exists a parity path, a contradiction.
	
	Now assume that one of the main subtrees, without a loss of generality the first one, has the root and all its leaves coloured with one colour. The only possibility to properly complete the parity vertex colouring of this subtree is to colour the two remaining vertices with one of the remaining colours; each with a different one. No matter what colour the root of the second subtree has, there always exists a parity path starting in this root and ending in the first subtree. Therefore our assumption was false and there is no proper colouring using 3 colours.
\end{proof}

\begin{figure}[htbp]\centering
	\includegraphics[scale=1]{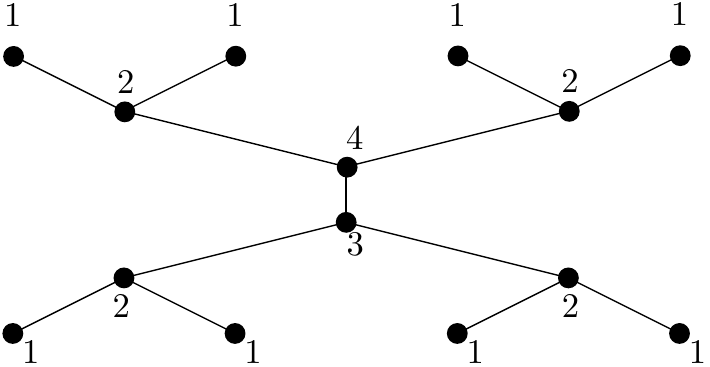}
	\caption{A parity vertex colouring of $T_{3,3}$ with four colours.}
	\label{obr02:t33}
\end{figure}

\begin{thm}
	The parity vertex chromatic number is not monotone with respect to minors.
\end{thm}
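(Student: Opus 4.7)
The plan is to find a specific graph $G$ with small parity vertex chromatic number that has a minor $H$ with strictly larger parity vertex chromatic number. The two candidates handed to us by Lemmas \ref{lemma3} and \ref{lemma4} are exactly what we need: $B_4$ with $\chrom{B_4}=3$ and $T_{3,3}$ with $\chrom{T_{3,3}}=4$. So it suffices to exhibit $T_{3,3}$ as a minor of $B_4$, since then the minor has strictly larger parity vertex chromatic number than the host graph.

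Concretely, I would proceed as follows. Let $r$ be the root of $B_4$ and let $c_1,c_2$ be its two children, each of which is itself the root of a copy of $B_3$ inside $B_4$. Contract the edge $\{r,c_1\}$ into a single vertex $r^{*}$. A quick count shows that the resulting graph has $15-1=14$ vertices, which matches $|V(T_{3,3})|=2\cdot 7$. The neighbours of $r^*$ in the contracted graph are exactly $c_2$ (the surviving child of $r$) together with the two children of $c_1$ in $B_4$. In other words, $r^*$ now sits at the top of a copy of $B_3$ formed by itself, the two former children of $c_1$ and their four grandchildren; similarly $c_2$ sits at the top of the unchanged copy of $B_3$ on its side; and $r^*$ is joined to $c_2$ by exactly one edge. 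This is precisely the structure of $T_{3,3}$ from Lemma \ref{lemma4}.

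Having established that $T_{3,3}$ is a minor of $B_4$, I would combine Lemmas \ref{lemma3} and \ref{lemma4} to conclude $\chrom{T_{3,3}}=4>3=\chrom{B_4}$, which directly contradicts minor-monotonicity of $\chrom{\cdot}$ and completes the proof.

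There is no real obstacle here; the only thing to verify carefully is that the edge contraction actually reproduces the combinatorial type of $T_{3,3}$ rather than some superficially similar graph. This amounts to the bookkeeping above: confirming that after contraction $r^*$ has degree exactly three, that the two subtrees hanging off $r^*$ and $c_2$ are each isomorphic to $B_3$, and that the only edge joining the two sides is $\{r^*,c_2\}$.
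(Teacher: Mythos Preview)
Your argument is correct and follows essentially the same route as the paper: contract one edge at the root of $B_4$ to obtain $T_{3,3}$ as a minor, and then invoke Lemmas \ref{lemma3} and \ref{lemma4} to get $\chrom{T_{3,3}}=4>3=\chrom{B_4}$. The only addition in the paper is a one-line remark that minors can also have \emph{smaller} parity vertex chromatic number, ruling out monotonicity in either direction; you may wish to include this as well.
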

\begin{proof}
	Observe that $T_{3,3}$ is a minor of $B_4$ (it suffices to contract one of the edges incident to the root of $B_4$). By Lemma \ref{lemma3}, $\chrom{B_4}=3$ and by Lemma \ref{lemma4}, $\chrom{T_{3,3}}=4$. Therefore, a minor of $B_4$ has greater parity vertex chromatic number than $B_4$. Since there obviously exist graphs whose minors have lower parity vertex chromatic numbers, this means that the parity vertex chromatic number is not monotone with respect to minors.
\end{proof}

\section{Lower bound for subdivisions of complete binary trees}
In this section we improve the lower bound of \citet{UMax} on the parity vertex chromatic number of subdivisions of complete binary trees. Recall that by properly coloured we mean coloured by a parity vertex colouring. Say a vertex is \emph{branched} if it has at least two sons. 

\begin{defn}
	A graph $H$ is a \emph{subdivision} of a graph $G$ if $H$ is obtained from $G$ by replacing some edges with paths. The original vertices of $H$ are called the \emph{main vertices of $H$}.
\end{defn}

\begin{defn}
	Let $G$ be a properly coloured rooted binary tree such that every branched vertex, every leaf and the root of $G$ is coloured by a same colour $c$. We call the tree \emph{nicely coloured}, we call the vertices of $G$ coloured by $c$ \emph{nicely coloured} vertices and we call $c$ the nice colour of $G$.
\end{defn}

\citet{UMax} noticed that in a nicely coloured tree the parity vectors of paths connecting nicely coloured vertices with the root are distinct. Then, they found a large nicely coloured tree in every properly coloured subdivision $B^*$ of $B_d$ and showed that $\chrom{B^*} \ge \sqrt{d}$.

We generalize this idea by considering a wider class of trees to show that $\chrom{B^*} \ge \sqrt{d} + \frac{1}{4} \log(d) - \frac{1}{2}$.

\begin{defn}
	Let $T_1, T_2, \dots, T_n$ be disjoint nicely coloured trees each rooted in $r_i$. Let us connect their roots by new edges into a path $P = (r_1,\dots,r_n)$. We root the resulting tree in $r_1$ and denote it by $G$. If the colouring of $G$ (defined by the colourings of trees $T_i$) is a parity vertex colouring, then we call $G$ a \emph{safflower of trees $T_1, T_2, \dots, T_n$}, we refer to the the path $P$ as a \emph{stem} of the safflower $G$ and we refer to the trees $T_1,\dots,T_n$ as \emph{original trees} of the safflower $G$. 
	
	Additionally, we call all nicely coloured vertices of the original trees \emph{nicely coloured vertices of the safflower} and we denote the number of nicely coloured vertices of $G$ as $\mathrm{Num}(G)$.	
\end{defn}
Note that nice colours of original trees of a safflower can be distinct. 
For example, a path $(v_1, \dots, v_n)$ coloured by a parity vertex colouring is a safflower of trees $T_1, T_2, \dots, T_n$ where $T_i$ consist of just $v_i$. Its stem is the whole path. 

When we take two safflowers $S_1$ of trees $T_1, T_2, \dots, T_n$ and $S_2$ of trees $G_1, \dots, G_m$, then we can connect the root of $T_n$ and the root of $G_1$ by a new edge. And we obtain a safflower of trees $T_1, \dots, T_n, G_1, \dots, G_m$ if the colouring of the new graph is still a parity vertex colouring.
Another example of a safflower is on Figure \ref{obr03:saff}.

\begin{figure}[htbp]\centering
	\includegraphics[scale=0.8]{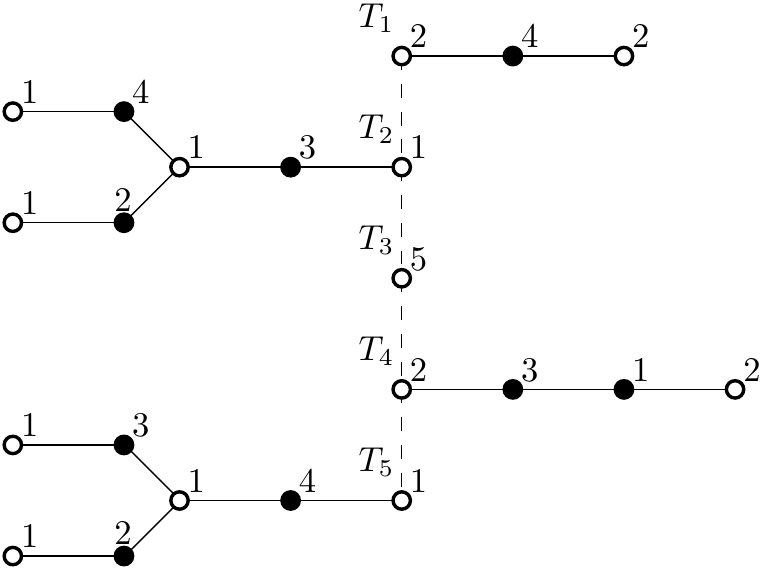}
	\caption[An example of a safflower.]{An example of a safflower of trees $T_1,\dots, T_5$. Edges of the stem of the safflower are dashed. Nicely coloured vertices of the safflower are circles, other vertices are disks.}
	\label{obr03:saff}
\end{figure}

The two following lemmas prove that every safflower uses a lot of colours. 

\begin{lemma}\label{lemma7}
	Let $F$ be a safflower with a root $r$. And let $C_1,C_2$ be paths from $r$ to arbitrary nicely coloured vertices $e_1,e_2$ of $F$, respectively. Denote the last common vertex of $C_1$ and $C_2$ by $v$. Then at least one of the vertices $e_1,e_2$ has the same colour as the vertex $v$.
\end{lemma}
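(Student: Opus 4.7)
The plan is to run a case analysis on which original trees of the safflower contain $e_1$ and $e_2$, and to exploit the fact that in every nicely coloured tree the root and all branched vertices share the same nice colour. It is worth noting up front that the parity condition built into the definition of a safflower is not actually used here: the claim is purely structural, which is probably why it can be isolated as its own lemma.

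First I would record that $F$ is a tree (a disjoint union of trees glued together by the stem edges), so the paths $C_1$ and $C_2$ are unique. Fix labels so that $e_1 \in T_j$ and $e_2 \in T_k$ with $j \le k$. Because the stem is the path $(r_1,\dots,r_n)$ and $F$ is rooted at $r = r_1$, the path $C_1$ necessarily traverses the stem $r_1,\dots,r_j$ and then descends within $T_j$ to $e_1$, and similarly $C_2$ traverses $r_1,\dots,r_k$ and then descends within $T_k$ to $e_2$. In particular the two paths agree on the whole stem segment $r_1,\dots,r_j$, so the last common vertex $v$ occurs at $r_j$ or deeper.

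Next I would split into two cases. If $j < k$, then at $r_j$ the path $C_1$ leaves the stem by entering $T_j$ toward $e_1$ (unless $e_1 = r_j$, in which case $v = e_1$ and we are done), whereas $C_2$ continues along the stem edge to $r_{j+1}$. These two neighbours of $r_j$ are distinct because $r_{j+1} \notin T_j$, so the divergence point is exactly $v = r_j$. As the root of $T_j$, this vertex is nicely coloured with the nice colour $c_j$, which is also the colour of $e_1$ since $e_1$ is a nicely coloured vertex of $T_j$. If $j = k$, both paths enter $T_j$ at $r_j$ and then follow the unique paths inside $T_j$ from $r_j$ to $e_1$ and to $e_2$; their divergence point $v$ is either $r_j$ itself, a vertex with two distinct tree-children lying on the two subpaths, or else coincides with $e_1$ or $e_2$. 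In the first two possibilities, $v$ is respectively the root or a branched vertex of $T_j$, hence nicely coloured with colour $c_j$, matching both $e_1$ and $e_2$.

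I do not anticipate any real obstacle. The only care required is in treating the degenerate situations where $v$ happens to equal $e_1$, $e_2$, or a stem root, but in each of these the colour equality is either immediate by definition or trivial. The conceptual content is simply that the divergence point on a path between nicely coloured vertices of a safflower is always forced to be a structural (root or branched) vertex of one of the original trees, and therefore inherits that tree's nice colour.
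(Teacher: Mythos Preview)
Your argument is correct and follows essentially the same route as the paper: both hinge on the observation that the divergence point $v$ is necessarily a root, branched vertex, or leaf of one of the original trees $T_j$, and that at least one of the $e_i$ lies in that same $T_j$ and hence shares its nice colour. The only cosmetic difference is that the paper splits cases on whether $v$ lies on the stem, whereas you split on $j<k$ versus $j=k$; these partitions carve up the same situations.
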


\begin{proof}
	The vertex $v$ is a nicely coloured vertex of some original tree $T$ of $F$ (because it is either endpoint of one of the paths $C_1,C_2$, or it is a branched vertex of $F$). Either $v$ lies on the stem of $F$ or not. If it does not lie on the stem of $F$, then both paths $C_1,C_2$ continue into the tree $T$, and they have to end there. Since their ends are nicely coloured in $F$, they are also nicely coloured in $T$ and so they both have the same colour as $v$.
	
	If $v$ lies on the stem, then at most one of the paths continue along the stem (otherwise $v$ would not be the last common vertex). So the other path has to stay in $T$, and by the same argument as before, its endvertex has the same colour as $v$.
\end{proof}

\begin{lemma} \label{lemma7.5}
	For every safflower $F$ coloured by $k$ colours, it holds that $\mathrm{Num}(F) \le 2^k-1$. 
\end{lemma}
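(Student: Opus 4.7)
The plan is to inject the set of nicely coloured vertices of $F$ into the $2^k-1$ non-zero parity vectors of $\{0,1\}^k$, which immediately gives $\mathrm{Num}(F) \le 2^k-1$. Let $r$ be the root of $F$. For each nicely coloured vertex $e$, let $C_e$ denote the unique path from $r$ to $e$ in the tree $F$ and set $\varphi(e) := \pv{V(C_e)}$. The vector $\varphi(e)$ is non-zero: otherwise the non-empty path $C_e$ would itself be a parity path, contradicting that the colouring of $F$ is a parity vertex colouring.

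The crux is the injectivity of $\varphi$. Assume for contradiction that $e_1 \ne e_2$ are distinct nicely coloured vertices with $\varphi(e_1) = \varphi(e_2)$. Let $v$ be the last common vertex of $C_{e_1}$ and $C_{e_2}$ (the least common ancestor of $e_1,e_2$ in $F$), and let $P$ denote the unique $e_1$--$e_2$ path in $F$. Working in $\mathbb{F}_2^k$, every vertex strictly preceding $v$ on the two root-paths lies in both $C_{e_1}$ and $C_{e_2}$ and cancels in $\varphi(e_1)+\varphi(e_2)$; the vertex $v$ also lies in both root-paths but appears only once on $P$, so a single copy of $\pv{\{v\}}$ must be added back. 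A short bookkeeping yields
\[
\pv{V(P)} \;=\; \varphi(e_1) + \varphi(e_2) + \pv{\{v\}} \;=\; \pv{\{v\}}.
\]

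Now Lemma~\ref{lemma7} enters: at least one of $e_1,e_2$, say $e_1$, has the same colour as $v$, so $\pv{\{e_1\}} = \pv{\{v\}}$. Let $P'$ be the subpath of $P$ obtained by deleting the endpoint $e_1$; since $e_1 \ne e_2$ the path $P$ has at least two vertices, so $P'$ is non-empty. Its parity vector is $\pv{V(P)} + \pv{\{e_1\}} = \vec{0}$, making $P'$ a parity path in $F$ and contradicting the assumption on the colouring. The degenerate cases $v \in \{e_1,e_2\}$ and $e_1 = r$ are handled by the same formula without modification.

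The only delicate step is the identity for $\pv{V(P)}$: one must correctly account for the fact that $v$ is double-counted in $\varphi(e_1)+\varphi(e_2)$ (it belongs to the common prefix of the two root-paths) but appears on $P$ only once, which is precisely what produces the extra $\pv{\{v\}}$ term. Once this formula is in place, Lemma~\ref{lemma7} is exactly the ingredient needed to convert the ``single-colour'' parity vector $\pv{\{v\}}$ into a genuine parity path by chopping off one endpoint.
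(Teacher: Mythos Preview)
Your proof is correct and follows essentially the same route as the paper's: you map each nicely coloured vertex to the parity vector of its root-path, argue these vectors are non-zero, and prove injectivity by showing that a collision would yield a parity path via Lemma~\ref{lemma7} after deleting one endpoint of the $e_1$--$e_2$ path. The paper's argument is identical in structure, phrasing the same computation as the symmetric difference of $C_1,C_2$ together with $v$.
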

\begin{proof}
	Assume $F$ is coloured by $k$ colours and let $M$ be the set of all subpaths of $F$ from the root of $F$ to all nicely coloured vertices of~$F$. Clearly $|M| = \mathrm{Num}(F)$. If $\mathrm{Num}(F) =1$, then $F$ has at least one vertex, so the lemma holds. Thus assume that $|M| \ge 2$.
	
	Suppose that two distinct paths $C_1,C_2$ from $M$ have the same parity vector. Let $v$ be the last common vertex of $C_1,C_2$. Then the symmetric difference of $C_1$ and $C_2$ together with the vertex $v$ compose a subpath (denote it by $C$) of length at least 1 in $F$. Thus $\pv{C} = \pv{C_1} + \pv{C_2} + \pv{v} = \pv{v}$. By Lemma \ref{lemma7} at least one endvertex $u$ of $C$ has the same colour as $v$. Therefore, the path obtained from $C$ by deleting~$u$ is a parity path (note that it is not an empty path). This is a contradiction, and so every path in $M$ has a distinct parity vector.
	
	Furthermore, none of the paths in $M$ can have the zero parity vector. Thus, $\mathrm{Num}(F) = |M| \le 2^k -1$.
\end{proof}

Whenever we prove that every parity vertex colouring of some graph contains a safflower with some number of nicely coloured vertices as a subgraph, this lemma gives a lower bound on the parity vertex chromatic number of the graph. Since a path can be seen as a safflower that has only nicely coloured vertices, Lemma \ref{lemma7.5} generalize Lemma \ref{lemma1}. 

We show how to find a safflower with a lot of nicely coloured vertices in every subdivision of a complete binary tree coloured by a parity vertex colouring. 
We first find a nicely coloured subtrees in every properly coloured rooted binary tree. And then we inductively find a safflower of some of these nicely coloured subtrees. And finally we show that this safflower subgraph of subdivisions has a lot of nicely coloured vertices such that Lemma \ref{lemma7.5} gives the desired bound. 

Recall that for a rooted tree $T$ and its vertex $v$ we denote by $T_v$ the rooted subtree of $T$ with the root $v$ and containing exactly all the descendants of $v$ and $v$ itself. To be able to inductively construct subtrees (we want to inductively find a safflower), we need to define compatible subtrees.

\begin{defn}
	Let $T$ and $T'$ be rooted trees. We say that $T'$ is a compatible subtree of $T$ if $T'$ is a subtree of $T$ and the root of $T'$ is the closest vertex of $T'$ to the root of $T$. That is, $T'$ can be embedded in $T$ with the same descendancy relation. 
\end{defn}

Compatible subtree notation is useful for recursive definitions. Say $v$ is a vertex in a rooted binary tree $T$, with sons $l,r$. Let $S_l,S_r$ be compatible subtrees of $T_l,T_r$, respectively. Then we can connect the roots of $S_l,S_r$ with $v$ by paths in $T$ to obtain a compatible subtree of $T_v$ consisting of $S_l,S_r$ and $v$ (and the paths joining them). 

We now find some compatible nicely coloured trees in every properly coloured rooted binary tree. We will afterwards used them in finding safflowers.

\begin{defn}
	Let $T$ be a rooted binary tree properly coloured by colours in $[k]$. For every $i \in [k]$, there exists a compatible nicely coloured subtree of $T$ with nice colour $i$ that contains the maximal number of nicely coloured vertices (if there exists more such trees, pick one). We denote this subtree of $T$ as $G_i(T)$, and for brevity we denote the number of its nicely coloured vertices by $g_i(T)$ (thus $\mathrm{Num}(G_i(T)) = g_i(T)$).  
\end{defn}

\begin{figure}[htbp]\centering
	\includegraphics[scale=0.8]{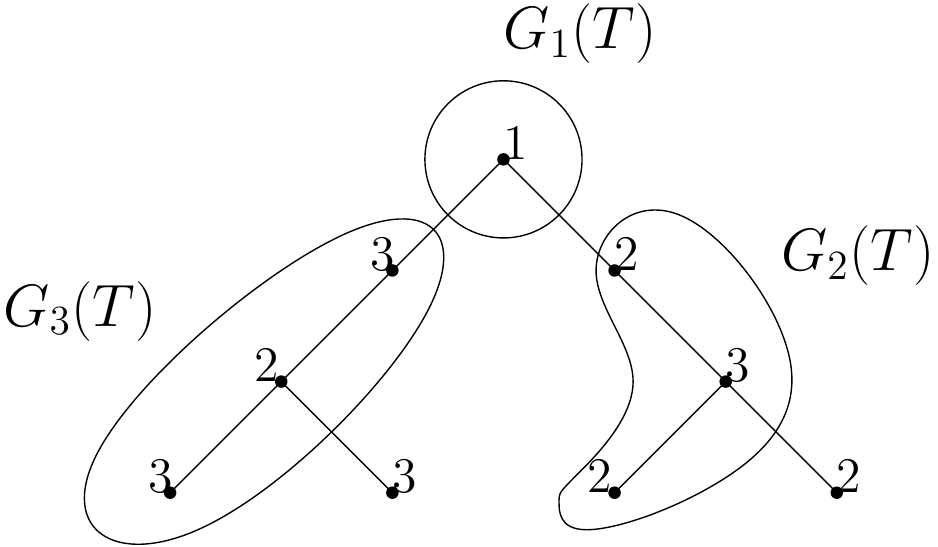}
	\caption[An example of maximal nicely coloured subtrees.]{An example of a rooted properly coloured binary tree and its maximal nicely coloured compatible subtrees. }
	\label{obr04:max_nice}
\end{figure}

See Figure \ref{obr04:max_nice} for an example. The maximality of this nicely coloured subtrees yields the following property.

\begin{lemma} \label{lemma8:maximality}
	Let $T$ be a rooted binary tree properly coloured by colours in $[k]$. Let $v$ be its vertex with sons $s,t$ and let $c$ be the colour of $v$. Then $g_c(T_v) = g_c(T_t) + g_c(T_s)+1$. 
\end{lemma}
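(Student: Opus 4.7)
I would prove the equality by separately establishing the two inequalities. Throughout, I would interpret $g_c(T_s)=0$ (and similarly for $T_t$) whenever the child's subtree contains no vertex of colour $c$, treating $G_c(T_s)$ as empty in that case. Since any subtree of $T$ inherits a parity vertex colouring, properness never needs to be re-verified for the constructed subtrees.

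For the lower bound $g_c(T_v)\ge g_c(T_s)+g_c(T_t)+1$ I would explicitly construct a witness. Setting $G_s=G_c(T_s)$ and $G_t=G_c(T_t)$, I would take $G_v$ to be the union of $\{v\}$, $G_s$, $G_t$, and the two tree paths connecting $v$ in $T$ to the roots of $G_s$ and $G_t$. By the gluing remark following the definition of compatible subtree, $G_v$ is a compatible subtree of $T_v$ rooted at $v$. It is nicely coloured because its root is $v$ (colour $c$), its leaves coincide with the leaves of $G_s$ and $G_t$ (colour $c$), and its branched vertices are $v$ together with the branched vertices of $G_s,G_t$ (all colour $c$). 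Counting only the $c$-coloured vertices already present in $\{v\}\cup G_s\cup G_t$ gives $\mathrm{Num}(G_v)\ge 1+g_c(T_s)+g_c(T_t)$.

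For the upper bound I would take an arbitrary compatible nicely coloured subtree $G$ of $T_v$ with nice colour $c$ and root $r$. If $r\ne v$, then compatibility forces $G$ to lie entirely in one of $T_s, T_t$ (any detour to the other side would require $v$, but $v\in G$ would contradict $r$ being the vertex of $G$ closest to the root of $T_v$); so $\mathrm{Num}(G)\le\max(g_c(T_s),g_c(T_t))\le 1+g_c(T_s)+g_c(T_t)$. If $r=v$, I would split $G$ into its intersections $G_s'=G\cap T_s$ and $G_t'=G\cap T_t$. These intersections need not be nicely coloured in their own right, since $s$ or $t$ may not carry colour $c$, so I would pass within $G_s'$ (and analogously $G_t'$) to the subtree rooted at the topmost $c$-coloured vertex, if any exists. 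Trimming the non-$c$ degree-$2$ stem at the top does not change the set of branched vertices or leaves, both of which are inherited from $G$ and therefore coloured $c$, so the trimmed object is a compatible nicely coloured subtree of $T_s$ containing the same number of $c$-coloured vertices as $G_s'$ itself. Summing these two bounds with the contribution of $v$ gives $\mathrm{Num}(G)\le 1+g_c(T_s)+g_c(T_t)$, as required.

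The hard part will be the ``topmost $c$-vertex'' trimming in the root-$v$ case: it needs to be checked carefully that discarding the non-$c$ degree-$2$ head preserves the defining properties of a nicely coloured tree and loses no $c$-coloured vertices. Once this bookkeeping is in place, the maximality of $G_c(T_s)$ and $G_c(T_t)$ closes the upper bound, and together with the explicit lower-bound construction the stated equality follows.
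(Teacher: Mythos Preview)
Your proposal is correct and follows essentially the same approach as the paper: build the witness tree from $v$, $G_c(T_s)$, $G_c(T_t)$ and connecting paths for the lower bound, and use maximality of $g_c(T_s),g_c(T_t)$ on the pieces of an optimal subtree for the upper bound. The paper compresses all of this into a single sentence (``by maximality, $G_c(T_v)$ contains $v$ and the maximal nicely coloured subtrees of $T_s,T_t$''), whereas you spell out both inequalities and carefully justify the trimming step for the upper bound; your version is more detailed but not different in substance.
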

\begin{proof}
	By maximality of $g_c(T_v)$ the graph $G_c(T_v)$ contains $v$ and the maximal nicely coloured subtrees of $T_s, T_t$, that has $g_c(T_t) + g_c(T_s)$ nicely coloured vertices by their maximality.
\end{proof}

Now we find a specific safflower subgraph in every properly coloured rooted binary tree.

\begin{defn} \label{def::saf_sub}
	Let $T$ be a rooted binary tree properly coloured by colours in $[k]$. Let $r$ be the root of $T$  and let $c$ be the colour of $r$. We define by induction a subgraph $\mathrm{Saff}(T)$ of $T$ rooted in $r$, that forms a safflower of some trees.
	\begin{enumerate}
		\item If $T$ has only one vertex $r$. Then $\mathrm{Saff}(T) = G_c(T)$ (thus it consist of one vertex $r
		$, ant it is a safflower of the tree $G_c(T)$).
		\item If $r$ has just one son $s$. Then $T_s$ is a properly coloured rooted binary tree. By induction $\mathrm{Saff}(T_s)$ is a subgraph of $T_s$ that forms a safflower of some trees $T_1, \dots, T_l$ (and $s$ is the root of this safflower and of the tree $T_1$). Let $R$ be a nicely coloured tree consisting of just $r$. We connect $R$ with $\mathrm{Saff}(T_s)$ by an edge $(r,s)$ to obtain a safflower of trees $R, T_1, \dots, T_l$ and we set this subgraph as $\mathrm{Saff}(T)$.
		\item If $r$ has two sons. Let us name them by $s,t$ such that $g_c(T_s)\ge g_c(T_t)$ (if $g_c(T_s)= g_c(T_t)$, we choose the names arbitrarily). Denote as $G$ the nicely coloured compatible subtree of $T$ rooted in $r$ and consisting of $r, G_c(T_s)$, and the path connecting $r$ with the root of $G_c(T_s)$ (It is well defined, because $G_c(T_s)$ is a compatible subtree of $T_s$, and thus of $T$ as well). Just like in the previous case,  we connect $G$ with $\mathrm{Saff}(T_t)$ by the edge joining their roots $r$ and $t$ to obtain a safflower of the tree $G$ and the original trees of $\mathrm{Saff}(T_t)$. And we set this subgraph to be $\mathrm{Saff}(T)$.
	\end{enumerate} 
	We call the subgraph $\mathrm{Saff}(T)$ \emph{a main safflower subgraph of $T$}.
\end{defn}

It is clear that the recursive definition is correct and that the main safflower subgraph of $T$ is always a safflower with a stem starting in the root of $T$ and ending in some leaf of $T$. An example of a main safflower subgraph is in Figure \ref{obr05:main_safflower}.

\begin{figure}[htbp]\centering
	\includegraphics[scale=0.8]{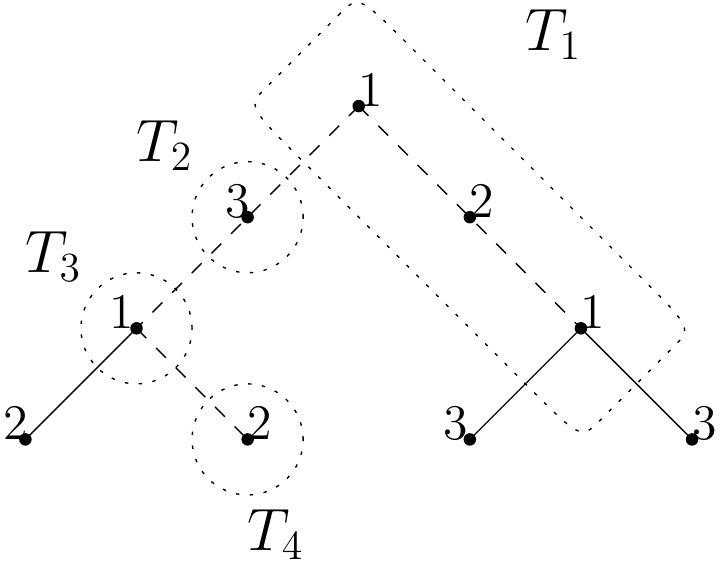}
	\caption[An example of a main safflower subgraph.]{An example of a rooted properly coloured binary tree and its main safflower subgraph (the dashed subgraph), that forms a safflower of trees $T_1, T_2, T_3, T_4$. }
	\label{obr05:main_safflower}
\end{figure}

It remains to show that the main safflower subgraph of a subdivision of a complete binary tree has lots of nicely coloured vertices. We do it terms of the two following subsequent lemmas.

\begin{lemma}\label{lemma8::orivssub}
	Let $T$ be a rooted binary tree properly coloured by colours in $[k]$, $v$ be some vertex of the stem of $\mathrm{Saff}(T)$ that is either a leaf or it is branched in $T$. Let $G$ be the original tree of $\mathrm{Saff}(T)$ rooted in $v$. Additionally, let $c$ be the colour of $v$. Then $\mathrm{Num}(G) \ge \frac{g_c(T_v)+1}{2}$.
\end{lemma}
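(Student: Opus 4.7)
The plan is to reduce the claim to the situation where $v$ is the root of the ambient tree and then split on whether $v$ is a leaf or a branched vertex in $T$. By inspecting the recursive Definition~\ref{def::saf_sub}, each vertex on the stem of $\mathrm{Saff}(T)$ is the root of some recursive subproblem, because cases~2 and~3 both prepend the new root and then descend into the subtree hanging off the stem. Consequently, the original tree of $\mathrm{Saff}(T)$ rooted at $v$ coincides with the very first original tree produced by $\mathrm{Saff}(T_v)$, so I may restrict attention to $T_v$ and assume $v$ is the root.

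If $v$ is a leaf of $T$, then $T_v = \{v\}$ and case~1 of Definition~\ref{def::saf_sub} applies, giving $G = G_c(T_v) = \{v\}$; hence $\mathrm{Num}(G) = 1 = g_c(T_v)$, so the inequality holds with equality.

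If $v$ is branched in $T$ with sons $s,t$, then case~3 applies. By construction $G$ contains $v$, the compatible subtree $G_c(T_s)$ for the son $s$ chosen so that $g_c(T_s) \ge g_c(T_t)$, and the path joining $v$ to the root of $G_c(T_s)$. The nicely coloured vertices of $G$ therefore include $v$ together with all $g_c(T_s)$ nicely coloured vertices of $G_c(T_s)$, so $\mathrm{Num}(G) \ge g_c(T_s) + 1$. Lemma~\ref{lemma8:maximality} then yields $g_c(T_v) = g_c(T_s) + g_c(T_t) + 1$, and combining this with $g_c(T_s) \ge g_c(T_t)$ gives $g_c(T_s) \ge (g_c(T_v)-1)/2$. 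Substituting into the previous bound delivers $\mathrm{Num}(G) \ge (g_c(T_v)+1)/2$, as required.

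The only genuinely delicate step is the initial reduction: one must parse Definition~\ref{def::saf_sub} carefully enough to see that the original tree of $\mathrm{Saff}(T)$ containing $v$ as its root is the same graph as the top original tree of $\mathrm{Saff}(T_v)$, so that restricting to $T_v$ loses no information. Once this is in hand, the rest is a one-line calculation that uses Lemma~\ref{lemma8:maximality} and the ``heavier son'' choice baked into case~3.
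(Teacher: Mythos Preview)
Your proof is correct and follows essentially the same approach as the paper: a case split on whether $v$ is a leaf or branched, followed in the branched case by invoking the ``heavier son'' choice from case~3 of Definition~\ref{def::saf_sub} together with Lemma~\ref{lemma8:maximality}. Your explicit reduction to $T_v$ is a harmless reframing of what the paper does implicitly by reading off the structure of $\mathrm{Saff}(T)$ at the stem vertex $v$; indeed, your inequality $\mathrm{Num}(G) \ge g_c(T_s)+1$ is arguably more careful than the paper's claimed equality, since the connecting path could in principle contribute additional vertices of colour $c$.
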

\begin{proof}
	If $v$ is a leaf than $G$ as well as $T_v$ consist of just the vertex $v$. Thus $\mathrm{Num}(G) = g_c(T_v)=1$ and the lemma holds.
	
	Otherwise, $v$ is branched. Thus it has two sons.
	Let $s$ be the son of $v$ that is not a part of the stem of $\mathrm{Saff}(T)$ and $t$ be the other son of $v$ (thus it is a part of the stem of $\mathrm{Saff}(T)$). By Definition \ref{def::saf_sub}, they exist, it hold that $g_c(T_s)\ge g_c(T_t)$ and $G$ consists of the tree $G_c(T_s)$, the vertex $v$ and the path connecting them. Since $v$ is coloured by $c$, by Lemma \ref{lemma8:maximality}, it holds that $g_c(T_v) = 1+g_c(T_s)+g_c(T_t)$ and similarly $\mathrm{Num}(G) = g_c(T_s)+1$. Therefore,
	
	$$2\cdot \mathrm{Num}(G) = 2\left(g_c(T_s)+1\right)\ge g_c(T_s)+g_c(T_t) + 2 = g_c(T_v)+1 $$
	
	and the proof is finished.
\end{proof}

\begin{lemma} \label{lemma9:col_of_main_saf}
	Let $k,n\ge 1$ and $B^*$ be a subdivision of $B_n$ properly coloured by colours in $[k]$. Then $\mathrm{Saff}(B^*)$ has at least $\sum_{i=1}^{k} \sum_{j=0}^{a_i-1} 2^{j}$ nicely coloured vertices where $\sum_{i=1}^{k} a_i = n$ for some non-negative integers $a_i$.
\end{lemma}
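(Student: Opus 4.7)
The plan is to prove the lemma by induction on $n$, after strengthening the statement so that the same exponents $a_i$ also satisfy the invariant $g_c(B^*) \ge 2^{a_c}-1$ for every colour $c \in [k]$. This additional hypothesis is what makes the induction close: the recursion of Definition \ref{def::saf_sub} at a branched vertex only controls the heavier subtree with respect to that vertex's own colour, so a naive induction would lose control of the other colours when recursing into a subtree.

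The base case $n=1$ is immediate, by taking $a_{c'}=1$ for the unique colour $c'$ appearing and $a_i=0$ otherwise. For the step, let $\rho$ be the root of $B^*$ with colour $c'$ and let $u_1,u_2$ be its children in $B^*$ with $g_{c'}(T_{u_1}) \ge g_{c'}(T_{u_2})$, as named in Definition \ref{def::saf_sub}. Let $p_2$ be the main child of $\rho$ in $B_n$ on the $u_2$-side, so $T_{p_2}$ is a properly coloured subdivision of $B_{n-1}$, and let $\ell \ge 0$ count the subdivision vertices of $B^*$ on the stem between $\rho$ and $p_2$; each of them contributes a singleton original tree via case 2 of the recursion. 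Unrolling the definition yields
\[
\mathrm{Num}(\mathrm{Saff}(B^*)) \;=\; (g_{c'}(T_{u_1})+1) \;+\; \ell \;+\; \mathrm{Num}(\mathrm{Saff}(T_{p_2})).
\]
I would then apply the strengthened IH to $T_{p_2}$ to obtain exponents $b_1,\dots,b_k$ with $\sum b_i = n-1$, set $a_{c'} = b_{c'}+1$ and $a_i = b_i$ for $i \ne c'$ (so $\sum a_i = n$), and verify the two conclusions.

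Verifying the main bound is routine: combining the WLOG choice, the containment $T_{p_2}\subseteq T_{u_2}$, and the invariant from the IH gives $g_{c'}(T_{u_1}) + 1 \ge 2^{b_{c'}}$, which together with $\mathrm{Num}(\mathrm{Saff}(T_{p_2})) \ge \sum(2^{b_i}-1)$ yields the desired $\sum(2^{a_i}-1)$. The hard part---and the reason the strengthening is necessary---is re-establishing the invariant $g_c(B^*)\ge 2^{a_c}-1$ for \emph{every} colour $c$ at once. When $c=c'$, Lemma \ref{lemma8:maximality} applied at $\rho$ provides a clean doubling $g_c(B^*) = g_c(T_{u_1}) + g_c(T_{u_2}) + 1 \ge 2(2^{b_c}-1)+1 = 2^{a_c}-1$, since both $g_c(T_{u_1})$ and $g_c(T_{u_2})$ are lower-bounded by $g_c(T_{p_2})$ via subtree containment. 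When $c\ne c'$ the WLOG choice conveys no information about colour $c$ and the doubling argument is unavailable; I would instead use the bare containment $g_c(B^*) \ge g_c(T_{p_2}) \ge 2^{b_c}-1 = 2^{a_c}-1$, which suffices precisely because the exponent is not incremented for $c \ne c'$.
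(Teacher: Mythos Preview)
Your argument is correct and is essentially the paper's proof reorganised as a top-down induction on $n$: your strengthened invariant $g_c(B^*)\ge 2^{a_c}-1$ is exactly the paper's claim $g_c(T_{m_i})\ge 2^{c(m_i)}-1$ read at the root, and your choice $a_i=b_i+[i=c']$ recovers the paper's $a_i=|K_i|$ counting main stem vertices by colour. One small wording slip: in the $c=c'$ invariant check, $g_{c'}(T_{u_1})\ge g_{c'}(T_{p_2})$ is not ``via subtree containment'' (indeed $T_{p_2}\not\subseteq T_{u_1}$) but via the WLOG choice $g_{c'}(T_{u_1})\ge g_{c'}(T_{u_2})$ together with $T_{p_2}\subseteq T_{u_2}$, which you already used correctly for the main bound.
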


\begin{proof}
	Let $r$ be a root of $B^*$. Let $S$ be the stem of $\mathrm{Saff}(B^*)$. It starts in $r$ and ends in some leaf of $B^*$. Thus it contains $n$ main vertices of $B^*$. Let us name them in the order from the leaf as $m_1, \dots, m_n$. For every $i\in[n]$ let $c(m_i)$ be the number of vertices in $\{m_j | j\le i\}$ that has the same colour as $m_i$ (thus, $c(m_i)$ is always at least 1). For every $i\in[k]$ let $K_i$ be the subset of all vertices $m_j$ that have the colour $i$. Clearly $\sum_{i=1}^{k} |K_i| = n$. Recall that every $m_i$ is a root of some original tree of $\mathrm{Saff}(B^*)$ (used in the Definition \ref{def::saf_sub}). Denote this original tree by $L_{m_i}$. Thus if we show that $\mathrm{Num}(L_{m_i}) \ge 2^{c(m_i)-1}$, then 
	
	\begin{align*}
	\mathrm{Num}(\mathrm{Saff}(B^*)) \ge \sum_{i=1}^{n} \mathrm{Num}(L_{m_i}) &= \sum_{i=1}^{k} \sum_{v\in K_i} \mathrm{Num}(L_{v}) \ge \\ &\ge \sum_{i=1}^{k} \sum_{v\in K_i} 2^{c(v)-1} 
	= \sum_{i=1}^{k} \sum_{j=1}^{|K_i|} 2^{j-1}
	\end{align*}
	where the last equality holds because for every $c\in [k]$ the set $\{c(v)|v\in K_c\}$ is equal to the set $\{1,\dots, |K_c|\}$ by the definition of $c(m_i)$. 
	
	And that is exactly what we want to prove. Therefore, it remains to prove that $\mathrm{Num}(L_{m_i}) \ge 2^{c(m_i)-1}$. Let us use the notation $T_v$ for the subgraph of $B^*$ that is rooted in $v$ and contains all descendants of $v$. We know that $m_i$ is a main vertex of $B^*$, so it is a leaf or a branched vertex. Hence, by lemma \ref{lemma8::orivssub} we know that $\mathrm{Num}(L_{m_i}) \ge \frac{g_c(T_{m_i})+1}{2}$ where $c$ is the colour of $m_i$. Thus, it suffices to prove that  $g_c(T_{m_i}) \ge 2^{c(m_i)}-1$.\\

	We will prove it by the induction on the values of $c(m_i)$. The tree $G_c(T_{m_i})$ always contains at least the vertex $m_i$, because $c$ is the colour of $m_i$. Thus it holds for $c(m_i) = 1$.
	
	If $c(m_i) > 1$, then $m_i$ is a main vertex of $B^*$ that is not a leaf. Thus it has two sons $s,t$. Let $t$ be the son that is a part of the stem of $\mathrm{Saff}(B^*)$. By the Definition \ref{def::saf_sub}, $g_c(T_s)\ge g_c(T_t)$. Since $c$ is the colour of $m_i$, by Lemma \ref{lemma8:maximality}, it holds that $g_c(T_{m_i}) = 1+g_c(T_s)+g_c(T_t)$. 
	Additionally, $\mathrm{Saff}(B^*)$ contains some main vertex $m_j$ for some $j<i$ ($m_j$ is under $m_i$) that has the same colour $c$ as the vertex $m_i$ and that satisfies $c(m_j) = c(m_i)-1$. By the induction hypothesis, $g_c(T_{m_j}) \ge 2^{c(m_j)} -1= 2^{c(m_i)-1}-1$. The tree $T_{m_j}$ is a compatible subtree of $T_t$ and thus $g_c(T_t)\ge g_c(T_{m_j})$. Combining this together we see that
	
	\begin{align*}
	g_c(T_{m_i}) = 1+g_c(T_s)+g_c(T_t) &\ge 2g_c(T_t) +1 \ge 2g_c(T_{m_j}) +1 \ge \\ &\ge 2\cdot \left(2^{c(m_i)-1}-1\right) +1 = 2^{c(m_i)}-1.
	\end{align*}
	And the lemma holds.

\end{proof}

We are almost done, the last thing we need is a technical lemma that simplifies the bound given by Lemma \ref{lemma9:col_of_main_saf} and \ref{lemma7.5}.
\begin{lemma} \label{lemma10}
	For every $k,n\ge 1$, and non-negative integers $a_i, i \in [k]$, satisfying $\sum_{i=1}^{k} a_i = n$ and $\sum_{i=1}^{k}\sum_{j=0}^{a_i-1}  \left(2^{j}\right) \le 2^k -1$ it holds that $k\ge \max(\sqrt{n}, \sqrt{n} + \frac{1}{4}\log_2(n)-\frac{1}{2})$.
\end{lemma}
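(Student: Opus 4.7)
The plan is to rewrite the hypothesis in a cleaner form and then extract both halves of the $\max$ from a single convexity estimate. First I would observe that $\sum_{j=0}^{a_i-1} 2^j = 2^{a_i}-1$, so the constraint becomes
\[
\sum_{i=1}^{k} 2^{a_i} \le 2^k + k - 1 \le 2^{k+1},
\]
where the last inequality uses the trivial fact that $k-1 \le 2^k$ for $k \ge 1$. On the other hand, convexity of $x \mapsto 2^x$ together with $\sum_i a_i = n$ gives, via Jensen's inequality,
\[
\sum_{i=1}^{k} 2^{a_i} \ge k \cdot 2^{n/k}.
\]
Chaining these two bounds yields the single key inequality $k \cdot 2^{n/k} \le 2^k + k - 1$, which I would then exploit twice.

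For the first part $k \ge \sqrt{n}$, I would argue by contradiction. If $k < \sqrt{n}$, then $n/k > k$, so $k \cdot 2^{n/k} > k \cdot 2^k$. A direct calculation gives $k \cdot 2^k - (2^k + k - 1) = (k-1)(2^k-1) \ge 0$ for all $k \ge 1$, which combined with the strict inequality from $n/k > k$ produces $k \cdot 2^{n/k} > 2^k + k - 1$, contradicting the key inequality. For $k=1$ the baseline estimate $(k-1)(2^k-1) \ge 0$ is an equality, so one relies on the strictness inherited from $n/k > k$; for $k \ge 2$ the baseline is already strict.

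For the stronger bound, I would start from $k \cdot 2^{n/k} \le 2^{k+1}$, take $\log_2$ of both sides to get $\log_2 k + n/k \le k + 1$, and rearrange to
\[
k - \frac{n}{k} \ge \log_2 k - 1.
\]
Setting $\beta := k - \sqrt{n}$ (which is $\ge 0$ by the first part), I would rewrite $k - n/k = (k^2-n)/k = \beta(k+\sqrt{n})/k \le 2\beta$, using $\sqrt{n} \le k$. Hence $2\beta \ge \log_2 k - 1$, and combining this with $\log_2 k \ge \frac{1}{2}\log_2 n$ (again from $k \ge \sqrt{n}$) yields
\[
k - \sqrt{n} \ge \frac{\log_2 k - 1}{2} \ge \frac{1}{4}\log_2 n - \frac{1}{2},
\]
which is the required inequality.

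The main obstacle is essentially cosmetic: degenerate edge cases where $\log_2 k - 1$ is non-positive (which by the first part forces $k = n = 1$) must be checked by hand, but there the target bound is trivially dominated by $k \ge \sqrt{n}$. Everything else is routine algebra once the key inequality $k \cdot 2^{n/k} \le 2^k + k - 1$ is established.
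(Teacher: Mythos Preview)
Your proof is correct and follows essentially the same approach as the paper: both rewrite the hypothesis via the geometric sum and Jensen's inequality to obtain $k \cdot 2^{n/k} \le 2^k + k - 1$, deduce $k \ge \sqrt{n}$ first, and then extract the refined bound from a logarithmic form of this inequality. The only cosmetic difference is in the last step, where the paper solves a quadratic inequality in $k$ while you sidestep it with the substitution $\beta = k - \sqrt{n}$; your parenthetical remark that $\log_2 k - 1 \le 0$ ``forces $k=n=1$'' is not quite right (it only gives $k\le 2$), but this does not affect the argument since your chain of inequalities is valid regardless of the sign of $\log_2 k - 1$.
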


\begin{proof}
	For $n=1$ it holds trivially. We first use some algebraic modifications and Jensen inequality, 
	
	\begin{align*}
	\sum_{i=1}^{k} \sum_{j=0}^{a_i-1} \left(2^{j}\right) =
	\sum_{i=1}^{k}  \left(2^{a_i}-1\right) =
	\sum_{i=1}^{k}  \left(2^{a_i}\right)  -k \ge
	k\cdot 2^{\frac{1}{k}\sum_{i=1}^{k}a_i} -k=
	k\cdot 2^{\frac{n}{k}} -k.
	\end{align*}
	Thus, we know that $k\cdot \left(2^{\frac{n}{k}} -1\right)\le 2^k-1$. Since $k \ge 1$ we see that $k\ge \frac{n}{k}$, or equivalently $k\ge \sqrt{n}$, which proves the first part. Next, we rewrite the inequality to the form $k\cdot 2^{\frac{n}{k}} \le 2^k+k-1$. We bound $k$ on the left side by $\sqrt{n}$, and $k-1$ on the right side by $2^k$ to obtain  $\sqrt{n}\cdot 2^{\frac{n}{k}} \le 2^k+2^k$. By taking logarithm of this inequality and multiplying by $k$ we obtain quadratic inequality $\frac{1}{2}k\log_2(n) +n \le k^2+k$. By solving this standard quadratic inequality in positive $k$ we obtain 
	
	$$k\ge \frac{-1+\frac{1}{2}\log_2(n)+\sqrt{\left(-1+\frac{1}{2}\log_2(n)\right)^2+4n}}{2}\ge \sqrt{n}+\frac{1}{4}\log_2(n)-\frac{1}{2},$$
	which proves the second part.
	
\end{proof}

We have everything ready to prove the main theorem.
\begin{thm}\label{theorem11}
	For every $n\ge1$ and every subdivision $B^*$ of $B_n$ it holds that $\chrom{B^*}\ge \max(\sqrt{n},\sqrt{n}+\frac{1}{4}\log_2(n)-\frac{1}{2})$.
\end{thm}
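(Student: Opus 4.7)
The theorem is essentially a bookkeeping consequence of the three main lemmas already proved (Lemmas \ref{lemma7.5}, \ref{lemma9:col_of_main_saf}, and \ref{lemma10}), so my plan is simply to chain them together.

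First I would fix any parity vertex colouring of $B^*$ using $k$ colours; the goal is to lower bound $k$. The construction of $\mathrm{Saff}(B^*)$ from Definition \ref{def::saf_sub} produces a safflower subgraph of $B^*$, so Lemma \ref{lemma7.5} immediately yields the upper bound $\mathrm{Num}(\mathrm{Saff}(B^*)) \le 2^k - 1$.

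Next I would invoke Lemma \ref{lemma9:col_of_main_saf}, which gives a lower bound of the form
\[
\mathrm{Num}(\mathrm{Saff}(B^*)) \ge \sum_{i=1}^{k} \sum_{j=0}^{a_i - 1} 2^{j}
\]
for some non-negative integers $a_1,\dots,a_k$ with $\sum_{i=1}^{k} a_i = n$. Combining the two bounds on $\mathrm{Num}(\mathrm{Saff}(B^*))$ gives
\[
\sum_{i=1}^{k} \sum_{j=0}^{a_i - 1} 2^{j} \le 2^{k} - 1,
\]
which is precisely the hypothesis of Lemma \ref{lemma10}.

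Finally, Lemma \ref{lemma10} converts this inequality into the desired lower bound $k \ge \max\bigl(\sqrt{n},\, \sqrt{n} + \tfrac{1}{4}\log_2(n) - \tfrac{1}{2}\bigr)$, completing the proof. There is no real obstacle here: all of the combinatorial work (finding the safflower, bounding its nicely coloured vertices, and solving the resulting transcendental inequality via Jensen) has been done in the preceding lemmas, and the theorem is the clean assembly of those pieces. The only thing to be a little careful about is the edge case $n = 1$, where $B_n$ is a single vertex and $B^*$ is likewise a single vertex, so $\chrom{B^*} = 1 \ge \sqrt{1}$ holds trivially; this matches the $n = 1$ base case already handled inside Lemma \ref{lemma10}.
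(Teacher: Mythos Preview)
Your proposal is correct and matches the paper's own proof essentially line for line: fix a colouring with $k$ colours, apply Lemma~\ref{lemma9:col_of_main_saf} and Lemma~\ref{lemma7.5} to obtain $\sum_{i=1}^{k}\sum_{j=0}^{a_i-1}2^{j}\le 2^{k}-1$, and then invoke Lemma~\ref{lemma10}. The extra remark about the $n=1$ edge case is harmless but unnecessary, since Lemma~\ref{lemma10} already covers it.
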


\begin{proof}
	Consider any proper colouring of $B^*$ with colours in $[k]$. By Lemma \ref{lemma9:col_of_main_saf}, $B^*$ contains a safflower with at least $\sum_{i=1}^{k} \sum_{j=0}^{a_i-1} 2^{j}$ nicely coloured vertices as a subgraph (the safflower $\mathrm{Saff}(B^*)$), where $\sum_{i=1}^{k} a_i = n$ for some non-negative integers $a_i$. Therefore by Lemma \ref{lemma7.5} it holds that $\sum_{i=1}^{k} \sum_{j=0}^{a_i-1} 2^{j} \le 2^k-1$. Thus by Lemma \ref{lemma10} this safflower, and consequently $B^*$ as well, uses at least $\max(\sqrt{n},\sqrt{n}+\frac{1}{4}\log_2(n)-\frac{1}{2})$ colours.
\end{proof}

\section{Lower bound on the parity chromatic number of binary trees}
In this section we prove the following theorem.

\begin{thm} \label{thm12}
	For every binary tree $B$ on $n$ vertices, $\chrom{B} > \sqrt[3]{\log{n}}$.
\end{thm}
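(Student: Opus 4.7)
The plan is to combine Lemma~\ref{lemma1} with Theorem~\ref{theorem11}: if $\chrom{T}$ is small, then $T$ contains neither a long path nor a large $B_d$ subdivision, and I will show that these two constraints together force $|T|$ to be small. Suppose for contradiction that $\chrom{T} = k \le \sqrt[3]{\log n}$. By Lemma~\ref{lemma1} the longest path in $T$ has at most $2^k - 1$ vertices, and by Theorem~\ref{theorem11} the tree $T$ contains no subdivision of $B_{k^2 + 1}$, since any such subdivision would force $\chrom{T} \ge \sqrt{k^2 + 1} > k$. Rooting $T$ at an endpoint of its longest path (necessarily a leaf of $T$) produces a rooted binary tree in which every vertex has at most two children, and whose height is at most $h := 2^k - 2$.

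The key quantitative step is the following size bound: a rooted binary tree of height at most $h$ with no subdivision of $B_{d+1}$ has at most $\binom{h+d}{d}$ vertices. Denote this maximum by $f(h,d)$. For each vertex $v$ let $S(v)$ be the Strahler number of the subtree rooted at $v$. The rooted-Strahler number of any rooting of a tree is at most the largest $D$ for which the tree contains a subdivision of $B_D$, and Strahler is non-increasing along root-to-leaf paths, so the no-$B_{d+1}$ hypothesis forces $S(v) \le d$ for every $v$. A case split on the root then gives the recurrence $f(h,d) \le 1 + f(h-1, d-1) + f(h-1, d)$: if the root has two children $c_1, c_2$, the bound $S(\text{root}) \le d$ and the Strahler recursion together imply $\min(S(c_1), S(c_2)) \le d-1$, so one subtree contributes at most $f(h-1, d-1)$ and the other at most $f(h-1, d)$. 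Induction with Pascal's identity $\binom{h+d}{d} = \binom{h+d-1}{d-1} + \binom{h+d-1}{d}$ closes the argument.

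Applying this bound with $h = 2^k - 2$ and $d = k^2$ gives $n \le \binom{2^k + k^2 - 2}{k^2}$. Using $\binom{N}{m} \le (eN/m)^m$ together with $2^k \ge k^2$ (valid for $k \ge 4$), a short calculation yields $\log_2 n \le k^3 - k^2 \bigl(2\log_2 k - \log_2(2e)\bigr)$, which is strictly less than $k^3$ for $k \ge 3$. Hence $k > \sqrt[3]{\log n}$, contradicting the assumption. The remaining cases $k \in \{1,2\}$ are verified directly: $k=1$ forces $n=1$, while $k=2$ forces $T$ to avoid $P_4$, hence to have diameter at most $2$ and at most $4$ vertices (since $T$ is binary), for which $\sqrt[3]{\log n} < 2$ trivially. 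The main obstacle I anticipate is the careful verification of the size bound $\binom{h+d}{d}$ and in particular the connection between the global no-$B_{d+1}$-subdivision hypothesis and the pointwise Strahler bound that drives the recursion.
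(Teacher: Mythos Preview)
Your approach is essentially the paper's: use Lemma~\ref{lemma1} to bound the depth, use Theorem~\ref{theorem11} to forbid a $B_{k^2+1}$ subdivision, derive the recurrence $f(h,d)\le 1+f(h-1,d-1)+f(h-1,d)$ for the maximum size of such a tree, and solve. The paper obtains the identical recurrence (its $A(l,d)$ is your $f(l-1,d)$) by a direct argument rather than via Strahler numbers, and then bounds $A(l,d)\le l^d$ instead of your $\binom{h+d}{d}$; this gives the one-line estimate $n\le (2^k-1)^{k^2}<2^{k^3}$ valid for \emph{all} $k\ge 1$, with no endgame case analysis.

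There is one genuine gap in your write-up: the estimate you derive uses $2^k\ge k^2$, which you correctly flag as needing $k\ge 4$, but you then only treat $k\in\{1,2\}$ separately, so $k=3$ is not covered. (It is of course trivial to check: with $h=6$ and $d=9$ one gets $n\le 2^7-1=127$, and $\sqrt[3]{\log_2 127}<2<3$.) A second, smaller wrinkle: the Pascal-identity induction for $f(h,d)\le\binom{h+d}{d}$ does not go through at $d=1$ using only the inductive bound $f(h-1,0)\le\binom{h-1}{0}=1$; you need the exact value $f(h-1,0)=0$ (or take $d=1$ as a base case). Both are easy fixes, and your Strahler formulation of the recurrence is a pleasant alternative to the paper's combinatorial derivation.
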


We show that every binary tree either contains a long path or it contains a subdivision of a large complete binary tree. For this purpose, we estimate the maximum number of vertices a binary tree can have when it has a bounded number of layers and does not contain certain subdivisions.

\begin{defn}
	For integers $l,d \ge 0$ let $A(l,d)$ be the maximal number $n$ such that there exists a rooted binary tree on $n $ vertices with at most $l$ layers and not containing a subdivision of $B_{d+1}$ as a compatible subgraph.
\end{defn} 

We first determine value of $A(l,d)$ in edge cases and than show that it can be computed by simple recursion.

\begin{claim} \label{claim1}
	For every $l\ge0$ it holds that $A(l,0) = 0$. For every $d\ge l\ge0$ it holds that $A(l,d) = 2^l -1$.
\end{claim}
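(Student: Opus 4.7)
The plan is to prove both parts by elementary counting arguments, treating the two cases separately and exploiting the definition of "compatible subgraph" together with the fact that subdividing only adds non-main vertices (so it never reduces the layer count).

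For the first assertion, $A(l,0)=0$, I would observe that $B_1$ is a single vertex (one layer), and a subdivision of a single vertex is still just a single vertex. Any non-empty rooted binary tree then contains its own root as a compatible subgraph that is (a subdivision of) $B_1$. So the only tree on $n$ vertices avoiding $B_1$ as a compatible subgraph is the empty one, giving $A(l,0)=0$ for every $l\ge 0$ (independent of $l$).

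For the second assertion, $A(l,d)=2^l-1$ when $d\ge l$, I would argue both bounds. For the upper bound I would invoke the standard fact that a rooted binary tree with at most $l$ layers has at most $\sum_{i=1}^{l}2^{i-1}=2^l-1$ vertices, since layer $i$ contains at most $2^{i-1}$ vertices. For the lower bound, the complete binary tree $B_l$ itself has $2^l-1$ vertices and exactly $l$ layers; and any compatible subgraph of $B_l$ clearly has at most $l$ layers. On the other hand, any subdivision of $B_{d+1}$ contains the $d+1$ main vertices of $B_{d+1}$ embedded with the descendancy relation preserved (that is what compatibility requires), so it needs at least $d+1$ layers. Since $d+1\ge l+1 > l$, no subdivision of $B_{d+1}$ can appear as a compatible subgraph of $B_l$. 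Hence $B_l$ is a witness giving $A(l,d)\ge 2^l-1$.

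There is no real obstacle here; the only subtle point is to be explicit that "compatible subgraph" preserves the number of layers from below, i.e.\ a compatible subtree isomorphic to a subdivision of $B_{d+1}$ uses at least $d+1$ layers of the ambient tree. Once that observation is made, both equalities are immediate, and the claim will be used in the following section as a base case for a recursive estimate of $A(l,d)$ in the non-trivial regime $l > d$.
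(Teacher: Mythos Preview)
Your proof is correct and follows essentially the same approach as the paper's own proof: both argue that avoiding a subdivision of $B_1$ forces the tree to be empty, and that $B_l$ witnesses the value $2^l-1$ because no subdivision of $B_{d+1}$ fits in $l$ layers when $d\ge l$. You simply spell out the layer-count reason for the latter more explicitly than the paper does.
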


\begin{proof}
	A subdivision of $B_1$ is a single vertex, so any tree not containing a subdivision of $B_1$ as a compatible subgraph must be empty.
	
	A binary tree with at most $l$ layers can have at most $2^l -1$ vertices. The complete binary tree with $l$ layers has this number of vertices and does not contain a subdivision of $B_{d+1}$ for $d \ge l$.
\end{proof}

\begin{lemma} \label{claim3}
	For every $l > d > 0 $ it holds that $A(l,d) = A(l-1, d-1) + A(l-1,d) + 1$.
\end{lemma}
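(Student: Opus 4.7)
The plan is to prove the two inequalities $A(l,d)\le A(l-1,d-1)+A(l-1,d)+1$ and $A(l,d)\ge A(l-1,d-1)+A(l-1,d)+1$ separately. Both directions are driven by a single structural observation: since $d\ge 1$, the root of $B_{d+1}$ is branched, so every subdivision of $B_{d+1}$ decomposes canonically as a new root together with two downward paths each leading into the main root of a subdivision of $B_d$.

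For the upper bound I would take an arbitrary rooted binary tree $T$ with at most $l$ layers containing no compatible subdivision of $B_{d+1}$ and split according to the number of sons of the root $r$. If $r$ is a leaf or has only one son $s$, then $T_s$ has at most $l-1$ layers and still avoids such a subdivision, so $|T|\le 1+A(l-1,d)$, which is already within the target bound. If $r$ has two sons $s,t$, then trivially $|T_s|,|T_t|\le A(l-1,d)$, but the sharper point is that at most one of $T_s, T_t$ can contain a compatible subdivision of $B_d$: otherwise, letting $v_s\in T_s$ and $v_t\in T_t$ be the main roots of such subdivisions $H_s, H_t$, the union of $H_s$, $H_t$, the downward paths from $r$ to $v_s$ and from $r$ to $v_t$, and the vertex $r$ itself forms a compatible subdivision of $B_{d+1}$ in $T$ rooted at $r$, contradicting the choice of $T$. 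Hence WLOG $|T_s|\le A(l-1,d-1)$ and $|T_t|\le A(l-1,d)$, giving $|T|\le A(l-1,d-1)+A(l-1,d)+1$.

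For the lower bound I would take extremal trees $T_s$ achieving $A(l-1,d-1)$ (at most $l-1$ layers, no compatible subdivision of $B_d$) and $T_t$ achieving $A(l-1,d)$ (at most $l-1$ layers, no compatible subdivision of $B_{d+1}$), attach them below a fresh root $r$ via two new edges, and verify that the resulting tree $T$ has at most $l$ layers, has $A(l-1,d-1)+A(l-1,d)+1$ vertices, and contains no compatible subdivision of $B_{d+1}$. For the last condition, I would locate the main root of a hypothetical compatible subdivision $H\subseteq T$ of $B_{d+1}$: if it lies in $T_s$ or $T_t$, then all of $H$ remains inside that subtree and $H$ is still a compatible subdivision there, contradicting the choice of that subtree; if it coincides with $r$, the two downward branches of $H$ descend into $T_s$ and $T_t$ respectively, producing a compatible subdivision of $B_d$ in each, again contradicting the choice of $T_s$.

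The main obstacle is the compatibility bookkeeping common to both directions: whenever I extract or assemble a subdivision, I need to check that its main root is indeed the vertex of the subdivision closest to the root of the ambient tree. This reduces in every instance to the fact that the main root of any subdivision sits strictly above all its other vertices, so descendancy within the subdivision matches descendancy in the ambient tree, and the closest-vertex condition propagates cleanly under the restriction and gluing operations used above.
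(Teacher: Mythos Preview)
Your proof is correct and follows essentially the same two-inequality strategy as the paper: showing that both child subtrees cannot simultaneously host a compatible subdivision of $B_d$ for the upper bound, and gluing extremal witnesses under a fresh root for the lower bound. You are in fact slightly more careful than the paper, since you treat the cases where the root has fewer than two sons and you spell out why the glued tree in the lower bound contains no compatible subdivision of $B_{d+1}$, points the paper leaves implicit.
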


\begin{proof}
	Let $B$ be a binary tree with at most $l$ layers that does not contain a subdivision of $B_{d+1}$ as a compatible subgraph. Let $r$ be its root. Both child subtrees of $r$, let us denote them by $T_1,T_2$, have at most $l-1$ layers. They cannot both contain a subdivision of $B_{d}$ as a compatible subgraph. Otherwise it would be possible to connect their roots through $r$ to compose a subdivision of~$B_{d+1}$ as a compatible subgraph of $B$. Thus, without a loss of generality, $T_1$ and $T_2$ does not contain a subdivision of $B_{d+1}$ and $B_{d}$, respectively, as a compatible subgraph. Hence $|V(T_1)|\le A(l-1,d)$ and $|V(T_2)| \le A(l-1,d-1)$. Therefore $A(l,d) \le A(l-1, d-1) + A(l-1,d) + 1$.
	
	On the other hand, let $T_1$ and $T_2$ be trees with respectively $A(l-1,d)$ and $A(l-1,d-1)$ vertices, at most $l-1$ layers and not containing respectively $B_{d+1}$ and $B_{d}$ as compatible subgraphs. When we connect their roots by a new vertex, we obtain a new tree with at most $l$ layers that does not contain $B_{d+1}$ as a compatible subgraph. Hence 	$A(l,d) \ge A(l-1, d-1) + A(l-1,d) + 1$.
\end{proof}

This recursion has an explicit solution.
\begin{lemma} \label{lemma16}
	For every integers $l$ and $d$ such that $l > d \ge 0$ it holds that
	
	\begin{equation*} 
	A(l,d) = \sum_{i=0}^{d}\left[\left(2^i -1\right)\cdot \binom{l-i-1}{l-d-1}\right] + \binom{l}{d} -1.
	\end{equation*}
\end{lemma}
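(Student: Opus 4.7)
The plan is to proceed by strong induction on $l$, invoking the recursion of Lemma \ref{claim3} together with Pascal's identity. The induction hypothesis is that the formula holds for every pair $(l',d')$ with $l' < l$ and $l' > d' \ge 0$, and I would prove it for an arbitrary $(l,d)$ with $l > d \ge 0$.

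The base case $l = 1$ forces $d = 0$, and both the formula and Claim \ref{claim1} give $0$. In the inductive step I split three cases. If $d = 0$ the sum in the formula vanishes and the trailing $\binom{l}{0} - 1 = 0$ matches Claim \ref{claim1}. If $l = d+1$ I use Lemma \ref{claim3} together with $A(d,d) = 2^d - 1$ from Claim \ref{claim1} and the induction hypothesis $A(d,d-1) = 2^d - 2$ to obtain $A(d+1,d) = 2^{d+1} - 2$; on the other side, setting $l = d+1$ makes every binomial $\binom{d-i}{0}$ equal to $1$, so the formula collapses to $\sum_{i=0}^d (2^i - 1) + (d+1) - 1 = 2^{d+1} - 2$, matching.

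For the main case $l > d+1$ with $d \ge 1$, Lemma \ref{claim3} gives $A(l,d) = A(l-1,d-1) + A(l-1,d) + 1$, and both summands satisfy the induction hypothesis. Adding the two formulas, the trailing terms combine via Pascal's identity as $\binom{l-1}{d-1} + \binom{l-1}{d} - 2 + 1 = \binom{l}{d} - 1$, which is the correct tail. For the binomial sums, I would peel off the $i = d$ term from $A(l-1,d)$, which contributes $(2^d-1)\binom{l-d-2}{l-d-2} = 2^d - 1$ and matches the $i=d$ term of the target formula since $\binom{l-d-1}{l-d-1} = 1$. For the remaining indices $0 \le i \le d-1$, the two sums contribute $(2^i-1)\bigl[\binom{l-i-2}{l-d-2} + \binom{l-i-2}{l-d-1}\bigr]$, which Pascal's identity collapses to $(2^i-1)\binom{l-i-1}{l-d-1}$, exactly the corresponding term of the target.

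The main obstacle is purely bookkeeping: ensuring the index ranges are handled correctly so that the $i = d$ term is separated before Pascal is applied to the overlapping range $0 \le i \le d-1$, and keeping the edge cases $d = 0$ and $l = d+1$ (where the formula's regime changes) straight. Beyond the two applications of Pascal's rule, no new idea is required.
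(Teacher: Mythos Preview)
Your proposal is correct and follows essentially the same approach as the paper: induction driven by the recursion of Lemma~\ref{claim3}, with the three-way case split $d=0$, $l=d+1$, and $l>d+1$, and Pascal's identity combining the two binomial sums in the main case. The only cosmetic difference is that the paper runs a separate sub-induction on $l$ for the boundary case $l=d+1$, whereas your strong-induction framework lets you invoke the formula for $A(d,d-1)$ directly as part of the outer hypothesis; the arithmetic is identical.
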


\begin{proof}
	We prove it by induction on both $l$ and $d$. First assume that $d=0$. In this case we need to prove that $A(l,0) = \binom{l}{0} -1$. This is true by Claim \ref{claim1}.
	
	Now assume that $l=d+1$. We need to prove that $A(l,l-1) = \sum_{i=0}^{l-1}\left(2^i -1 \right) + l-1$. We use induction on $l$ to do it. For $l=2$ we need to prove that $A(2,1)=2$. This holds because by Lemma \ref{claim3}, we get $A(2,1) = A(1,1) + A(1,0) +1$. And $A(1,1) + A(1,0) +1=2$ by Claim \ref{claim1}. For $l>2$ we use Lemma \ref{claim3}, the induction hypothesis, and Claim \ref{claim1} to obtain
	
	\begin{align*}
	A(l,l-1) &= A(l-1,l-1) + A(l-1,l-2) +1 \\
	& = 2^{l-1} - 1 + \sum_{i=0}^{l-2}\left(2^i -1 \right) + (l - 1) -1 + 1 
	 = \sum_{i=0}^{l-1}\left(2^i -1 \right) + l-1.
	\end{align*}

	Finally, assume that $l - 1> d >0$. By the induction hypothesis and Lemma \ref{claim3} we see that
	
	\begin{align*}
	A(l,d) &= A(l-1,d) + A(l-1,d-1) +1 \\
	& = \sum_{i=0}^{d}\left[\left(2^i -1\right)\cdot\binom{l-i-2}{l-d-2}\right] + \binom{l-1}{d} -1 + \\ 
	&\qquad  \qquad \sum_{i=0}^{d-1}\left[\left(2^i -1\right)\cdot\binom{l-i-2}{l-d-1}\right] + \binom{l-1}{d-1} -1 +1 \\
	& = \sum_{i=0}^{d-1}\left[\left(2^i -1\right)\cdot\left(\binom{l-i-2}{l-d-1}+\binom{l-i-2}{l-d-2}\right)\right]  + \left(2^d -1 \right) + \binom{l}{d} -1\\
	& = \sum_{i=0}^{d}\left[\left(2^i -1\right)\cdot\binom{l-i-1}{l-d-1}\right] + \binom{l}{d} -1.
	\end{align*}
\end{proof}

The formula provided by Lemma \ref{lemma16} does not have a simple form. Instead, we provide a simple upper bound.

\begin{lemma} \label{lemma17}
	For $l \ge d \ge 0, l>0$ it holds that $A(l,d) \le l^d$.
\end{lemma}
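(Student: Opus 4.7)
The plan is to proceed by induction on $l$. The base case $l = 1$ forces $d \in \{0, 1\}$, and Claim \ref{claim1} gives $A(1, 0) = 0$ and $A(1, 1) = 2^1 - 1 = 1$, both of which are at most $1 = 1^d$.

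For the inductive step with $l \ge 2$, I would first dispose of the boundary cases: if $d = 0$ then Claim \ref{claim1} gives $A(l, 0) = 0 \le 1 = l^0$, and if $d = l$ then $A(l, l) = 2^l - 1 \le l^l$ since $l^l \ge 2^l$ for $l \ge 2$.

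The main case is $0 < d < l$, where Lemma \ref{claim3} supplies the recursion $A(l, d) = A(l-1, d-1) + A(l-1, d) + 1$. When $d \ge 2$, applying the inductive hypothesis to both summands yields
\begin{equation*}
A(l, d) \le (l-1)^{d-1} + (l-1)^d + 1.
\end{equation*}
Expanding $l^d = ((l-1) + 1)^d$ by the binomial theorem and retaining only the terms with $k \in \{d, d-1, 0\}$ (which are three distinct indices since $d \ge 2$) gives $l^d \ge (l-1)^d + d(l-1)^{d-1} + 1$. Because $d \ge 2$, we have $d(l-1)^{d-1} \ge 2(l-1)^{d-1} \ge (l-1)^{d-1}$, and the required bound follows.

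The delicate point, which I expect to be the main obstacle, is the case $d = 1$: the expansion of $l^1 = (l-1) + 1$ has no ``middle'' term to absorb slack, and the inductive bound $A(l-1, 0) \le (l-1)^0 = 1$ overshoots the true value $0$ by exactly the amount needed to break the naive argument. I would handle it by invoking Claim \ref{claim1} directly, which gives $A(l-1, 0) = 0$ and collapses the recursion to $A(l, 1) = A(l-1, 1) + 1 \le (l-1) + 1 = l = l^1$ by the inductive hypothesis applied to $A(l-1, 1)$.
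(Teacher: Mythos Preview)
Your proof is correct, and it takes a genuinely different route from the paper's. The paper invokes the explicit closed form of Lemma~\ref{lemma16} and then does a case split on $d$: the cases $d \le 3$ are checked by hand from the formula, and for $d \ge 4$ the sum is bounded termwise by $(2^d-1)\binom{l-i-1}{l-d-1}$, collapsed via the Hockey-stick identity to $2^d\binom{l}{d}-1$, and finished with $2^d\binom{l}{d} \le l^d$ (valid only for $l > d \ge 4$, hence the separate treatment of small $d$). Your argument bypasses Lemma~\ref{lemma16} and the Hockey-stick identity entirely, working instead by induction on $l$ directly from the recursion of Lemma~\ref{claim3} and a three-term truncation of the binomial expansion of $((l-1)+1)^d$. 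This is more elementary and arguably cleaner; the paper's approach, on the other hand, exploits the explicit formula that was developed anyway and gives somewhat sharper information about where the slack lies. Your identification of $d=1$ as the delicate case, handled by replacing the inductive bound $A(l-1,0)\le 1$ with the exact value $A(l-1,0)=0$, mirrors the paper's need to treat small $d$ separately for a different reason (namely that $2^d\binom{l}{d}\le l^d$ fails there).
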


\begin{proof}
	The case when $l=d$ easily follows from Claim \ref{claim1}.		
	Now assume that $l > d$. We apply Lemma \ref{lemma16}. In case $d=0$ we see that $A(l,d) = 0 < l^0$. In case $d=1$ we get $A(l,d) = l$. In case $d=2$ and $d=3$ we get
	
	\begin{align*}
	A(l,2) & =\sum_{i=0}^{2}\left[\left(2^i -1\right)\cdot\binom{l-i-1}{l-2-1}\right] + \binom{l}{2} -1 
	= \binom{l-2}{l-3} + \binom{l}{2} +2< l^2\\
	A(l,3) & =\sum_{i=0}^{3}\left[\left(2^i -1\right)\cdot\binom{l-i-1}{l-3-1}\right] + \binom{l}{3} -1 = \binom{l-2}{2} + 3l -3  + \binom{l}{3} < l^3  
	\end{align*}
	where we used $l>d$. In case $d>3$ we get
	
	\begin{align*}
	A(l,d) & =\sum_{i=0}^{d}\left[\left(2^i -1\right)\cdot\binom{l-i-1}{l-d-1}\right] + \binom{l}{d} -1 \\
	& \le \left(2^d -1\right)\cdot\sum_{i=0}^{d}\binom{l-i-1}{l-d-1} + \binom{l}{d} -1\\
	& = \left(2^d -1\right)\cdot\binom{l}{l-d} + \binom{l}{d} -1 = \left(2^d\right)\cdot\binom{l}{d} -1 < l^d.
	\end{align*} 
	In the last step we used inequality $\left(2^d\right)\cdot\binom{l}{d} \le l^d$, which clearly holds for $l>d\ge 4$. In the second step we used the following well known Hockey-stick identity.
	
	\begin{prop}[(Hockey-stick identity)]
		For every $n\ge r>0$ it holds that
		$$\sum_{i=r}^{n} \binom{i}{r} = \binom{n+1}{r+1}.$$
	\end{prop}

\end{proof}

Now we are finally ready to prove Theorem \ref{thm12}.

\begin{proof}[Proof of Theorem \ref{thm12}]
	Let $n\ge1$ and $B$ be an arbitrary binary tree on $n$ vertices and let $b=\chrom{B}$. Root the tree in an arbitrary vertex. Lemma \ref{lemma1} implies that $B$ does not contain a path on $2^b$ vertices as a subgraph. Thus $B$ has at most $2^b-1$ layers. Theorem \ref{theorem11} implies that $B$ does not contain a subdivision of $B_{b^2+1}$ as a subgraph, thus particularly not as a compatible subgraph. Hence $n \le A(2^b-1,b^2)$. By Lemma \ref{lemma17} we see that $n\le \left(2^b-1\right)^{b^2} < 2^{b^3}$. Therefore $b> \sqrt[3]{\log n}$.
\end{proof}

\section{FPT of computing $\chrom{G}$ and coNP-completeness of deciding whether a colouring is a parity vertex colouring}

There have been so far no discussion of the computational complexity of the parity vertex chromatic number. On the other hand, the complexity of computing the unique maximum chromatic number of graphs is far more explored (see e.g. \cite{Bodlaender95}). Unfortunately, from complexity point of view this colouring is quite different from the parity vertex colouring. 

\citet{Bodlaender95} proved that deciding whether there exists a unique maximum colouring with less than $k$ colour is NP-complete. Thus, there exists a polynomial time algorithm deciding if a given colouring is a unique maximum colouring. \citet{NPcom} showed that the same problem for the conflict-free colouring is coNP-complete. We use a slight variation of their proof to show the same fact about the parity vertex colouring.

\begin{thm}
	Given a graph and its colouring, it is coNP-complete to decide whether the colouring is a parity vertex colouring.
\end{thm}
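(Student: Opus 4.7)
The plan is to prove coNP-completeness by separately establishing membership in coNP and coNP-hardness. For membership, given a \emph{no}-instance $(G,c)$, a certificate is a single parity path $P$ in $G$: we compute $\pv{V(P)}$ in time polynomial in $|V(G)|$ and in the number of colours, and verify that every coordinate is $0$. This polynomial-time verifiability places the problem in coNP.

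For coNP-hardness I would adapt the reduction of \citet{NPcom} used for the analogous coNP-completeness of conflict-free colourings. Their reduction starts from an NP-hard source problem and constructs a coloured graph $(G_0,c_0)$ together with distinguished endpoints $s,t$, such that $G_0$ contains an $s$--$t$ path violating the conflict-free property if and only if the source instance is a \emph{yes}-instance. Typically $G_0$ is organised as a chain of variable gadgets followed by clause-checking gadgets sharing a common backbone, so that every $s$--$t$ path must commit to a truth assignment in the variable gadgets and then verify each clause.

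To transfer the argument to parity vertex colouring, I would keep the graph structure of \citet{NPcom} and redesign the colouring. Where the conflict-free proof uses a \emph{unique occurrence} witness colour, I would instead pair each literal-occurrence in a clause gadget with a vertex in the corresponding variable gadget that is traversed precisely when the literal is false. With this pairing, every auxiliary colour is accumulated an even number of times along the chosen $s$--$t$ path if and only if the assignment satisfies every clause, so the path is a parity path in exactly that case. Backbone colours can be arranged in matched pairs so that they cancel automatically along any complete $s$--$t$ traversal.

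The main obstacle I expect is ruling out \emph{spurious} parity paths: paths, or subpaths, that do not follow the intended $s$--$t$ route yet still have zero parity vector. To forestall these I would plant \emph{fingerprint} colours on skeleton vertices of each gadget so that any deviation from the intended traversal leaves some fingerprint colour with odd parity, and any strictly shorter subpath of the intended route picks up at least one fingerprint colour exactly once. A case analysis over the gadget topology, parallel to the one in \citet{NPcom} but with parities in place of unique occurrences, should then confirm that parity paths in $(G_0,c_0)$ correspond exactly to satisfying assignments. Since the source problem is NP-hard, the existence-of-a-parity-path problem is NP-hard, so the original decision problem is coNP-hard, completing the proof.
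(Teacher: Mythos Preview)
Your coNP membership argument is fine and matches the paper.

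For hardness, however, what you have is a plan rather than a proof, and the plan has a real gap. You correctly identify the crux --- ruling out spurious parity paths that do not follow the intended $s$--$t$ route --- but you do not solve it; ``fingerprint colours'' and ``a case analysis \dots\ should then confirm'' is a promissory note, not an argument. With SAT-style variable/clause gadgets this analysis is genuinely delicate: every subpath, not just full $s$--$t$ traversals, must carry some colour an odd number of times, and controlling parities across all partial traversals of a branching gadget graph is considerably harder than controlling a single ``unique occurrence'' as in the conflict-free setting. You also seem to have guessed the shape of the \citet{NPcom} reduction; the construction the paper actually adapts is not a SAT gadget chain.

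The paper's route sidesteps all of this by reducing from \textsc{Hamiltonian Path} instead. Given $G$ on $\{v_1,\dots,v_n\}$, it builds $G^*$ from two copies $G',G''$ of $G$ joined by internally disjoint paths $P_i$ from $v'_i$ to $v''_i$, and colours so that \emph{every colour is used exactly twice}: $v'_i,v''_i$ share colour $i$, and the inner vertices $v_{i,j},v_{j,i}$ share a colour. This single structural fact does the heavy lifting: any parity path must use each colour $0$ or $2$ times, which forces it to hit both $v'_i$ and $v''_i$ for some $i$, hence to cross some entire $P_j$, hence (by the shared inner colours) to touch every $P_l$, hence to be essentially Hamiltonian in $G^*$ --- which yields a Hamiltonian path in $G$. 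Conversely a Hamiltonian path in $G$ lifts to a Hamiltonian parity path in $G^*$. The ``every colour twice'' trick is exactly what makes the spurious-path analysis tractable, and it is the idea your proposal is missing.
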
 

\begin{proof}
	
	We need to prove that the problem is coNP-hard and that it lies in coNP. In other words we have to prove that this problem is at least as hard as any problem in coNP, and that given an appropriate certificate we can verify in polynomial time that an instance of this problem is not properly coloured. We start with the first part.
	
	We show that the complement of the Hamiltonian path problem can be reduced to our problem. That is, given a graph $G$ we construct in polynomial time a graph $G^*$ with colouring $Col$ of its vertices such that $G$ has no Hamiltonian path if and only if $Col$ is proper.
	
	\begin{sloppypar}
	Let $\{v_1, \dots, v_n\}$  be the vertex set of $G$. We define $G^*$ to consist of two isomorphic copies $G'$ and $G''$ of $G$ with vertex sets $\{v'_1,\dots,v'_n\}$ and $\{v''_1,\dots,v''_n\}$, respectively. Additionally, $G^*$ contains for every pair of vertices $v'_i, v''_i$ a path $P_i=(v'_i, v_{i,1}, \dots, v_{i,i-1}, v_{i,i+1},\dots, v_{i,n}, v''_i)$ where $v_{i,1}, \dots, v_{i,i-1}, v_{i,i+1},\dots, v_{i,n}$ are new vertices. 
	We now define the colouring $Col$. For every $i \in [n]$ we set $Col(v'_i) = Col(v''_i) = i$ and for every $n \ge i > j \ge 1$ we set $Col(v_{i,j}) = Col(v_{j,i}) = (i-1) \cdot n + j$. Observe that every colour is used exactly twice and that inner vertices of every $P_i$ are coloured with distinct colours. Furthermore, every two distinct paths $P_i,P_j$ use the same colour on exactly one pair of inner vertices, namely $v_{i,j}$ and $v_{j,i}$.
	\end{sloppypar}
	
	Let $G$ contain some Hamiltonian path, say $F = (v_1, v_2, \dots, v_n)$. It follows that $G^*$ also contains a Hamiltonian path obtained from $F$ by replacing every vertex $v_i$ by the path $P_i$ or its reverse in a way that the consecutive paths can be connected by an edge of $G'$ or $G''$ (so the end of the first path and the start of the second path lies in the same copy of $G$). Since every colour in $G^*$ is used exactly twice, it follows that this path is a parity path and so the colouring $Col$ is not proper.
	
	On the other hand, let $Col$ not be a proper parity colouring and let $F$ be a parity path of $G^*$. We show that $G^*$ and consequently $G$ contains a Hamiltonian path. Since $F$ is a parity path, every colour is used even number of times on $F$. Since every colour is used exactly twice in $G^*$, it follows that every colour is used twice or zero times on $F$. Recall that inner vertices of every path $P_i$ have different colours. Thus $F$ must contain some vertex of $G'$ or $G''$, say $v'_i$. Vertex~$v''_i$ is the only other vertex using the same colour as $v'_i$. Hence $F$ contains both $v'_i$ and $ v''_i$. Therefore it must contain an entire $P_j$ for some $j$ (subgraphs $G', G''$ are connected only by these paths). Since exactly one colour of every other $P_l$ is used also on $P_j$, it follows that $F$ contains vertices from all $P_l$. Therefore $F$ also contains all vertices of both $G'$, $G''$.
	
	Assume, for now, that $F$ is not Hamiltonian. Observe that if it does not contain all vertices of some $P_i$, then one of its end vertices must be on that path. Thus $F$ contains all paths $P_i$, except for at most two exceptions $P_k$, $P_l$. Suppose that it does not contain two different vertices of $P_k$, say $v_{k,i}$, $v_{k,j}$. At least one of indices $i$, $j$ is different from $l$, say $i$. Hence $F$ contains the entire $P_i$ and consequently also $v_{i,k}$. Therefore it must contain also $v_{k,i}$, a contradiction. We can use the same argumentation for $P_l$, and so it follows that $F$ does not contain at most one vertex of $P_k$ and at most one vertex of $P_l$. It easily follows that we can extend $F$ by these vertices and obtain a Hamiltonian path.
	
	So there always exists a Hamiltonian path in $G^*$, say the path $F$. Observe that in every Hamiltonian path of $G^*$ with end vertices in $G' \cup G''$, paths $P_i$ are part of the Hamiltonian path. So in this case we would obtain a Hamiltonian path in the graph $G$ by contracting paths $P_{i}$. We now show that we can modify $F$ to find such Hamiltonian path in $G^*$. If the end vertices of $F$ are adjacent we obtained a Hamiltonian cycle. Thus we can split it in between some pair of vertices $v'_i$, $v'_j$ and consequently find a Hamiltonian path in $G$. Otherwise, let $x,y$ be end vertices of $F$. The vertex $x$ must be adjacent to some vertex of $G' \cup G''$, say $v'_i$ (otherwise one of its neighbours would not be on the Hamiltonian path). So the Hamiltonian path looks like this $(x, \dots , v'_j, v'_i, v'_k, \dots y)$. We can reconnect it to obtain a Hamiltonian path $(v'_j, \dots , x, v'_i, v'_k, \dots y)$. In a similar way we can reconnect $y$ and its neighbour to obtain a Hamiltonian path with end vertices in $G' \cup G''$, and consequently a~Hamiltonian path in $G$.\\
	
	It remains to prove that the problem is in coNP. If the given certificate is a parity path, then we can easily verify in linear time that this path really contains every colour even number of times.
	
\end{proof}

This result suggests that computing the precise parity vertex chromatic number of a given graph is hard. We show that for graphs with bounded treewidth the problem is efficiently solvable if we consider the number of colours to be constant. We denote the treewidth of $G$ by $\mathrm{tw}(G)$.

We use the following well-known theorem.

\begin{thm}[(Courcelle's theorem \cite{Courcelle})] \label{thm:Courcle}
	Let $\varphi$ be a  CMSO\textsubscript{2} sentence and $G$ be a~graph given with its tree decomposition. There exists an algorithm deciding whether $\lfloor G\rfloor \models \varphi$ (that is, $G$ has a property $\varphi$) in time $f(|\varphi|,\mathrm{tw}(G)) \cdot n$ where $n$ is the size of $G$ , $\mathrm{tw}(G)$ is the~treewidth of $G$ and $f$ is some computable function.
\end{thm}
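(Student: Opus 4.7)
The plan is to prove Courcelle's theorem by turning the evaluation of a CMSO\textsubscript{2} sentence on a graph of bounded treewidth into the run of a finite bottom-up tree automaton on (a relabelled version of) the input tree decomposition. First I would preprocess the tree decomposition: in linear time transform it into a \emph{nice} tree decomposition (with Leaf, Introduce-Vertex, Introduce-Edge, Forget, and Join nodes) of the same width $w=\mathrm{tw}(G)$, possibly at the cost of multiplying the number of bag nodes by $O(w)$. This gives the input graph a canonical parse tree in which each internal node is labelled by a constant-size operation over $w$-boundaried graphs (graphs with up to $w+1$ distinguished, labelled boundary vertices), and the whole graph $G$ is the interpretation of this parse tree.

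Next I would develop the algebraic core of the proof: a Feferman-Vaught / Myhill-Nerode style congruence on $w$-boundaried graphs. For a fixed CMSO\textsubscript{2} sentence $\varphi$, two $w$-boundaried graphs are declared equivalent if they satisfy the same subformulas of $\varphi$ under every assignment of elements and monadic set/edge-set variables to their boundary vertices/boundary edges. By induction on $\varphi$ I would show that this relation is a congruence with respect to all five operations of the nice decomposition, and that the number of classes is bounded by some computable $f(|\varphi|,w)$. The atomic cases (adjacency, membership, equality, and the congruence predicates "$|X|\equiv r\pmod q$") are straightforward; Boolean connectives are handled by pairing classes; the essential step is the quantifier case, where one replaces "$\exists X\,\psi$" by moving to an extended boundary that records the intersection of $X$ with the current bag, and then projecting the class set back after Forget nodes. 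The CMSO\textsubscript{2} modular counting quantifiers are accommodated by enriching the state with, for each guessed set, its cardinality modulo the relevant fixed integers.

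Once this congruence is in hand, the algorithm itself is a standard bottom-up dynamic program: the state stored at a node of the nice decomposition is the equivalence class of the boundaried graph produced at that node, and each of the five node types induces a constant-time transition table (of size depending only on $|\varphi|$ and $w$). Running this automaton from the leaves to the root reads the class of the whole graph $G$ in time $O(f(|\varphi|,w)\cdot n)$; $G\models\varphi$ is decided by checking whether the final class belongs to the set of accepting classes, which is precomputed during the construction of $\varphi$'s type system.

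The main obstacle is the quantifier step together with the jump from MSO\textsubscript{1} to MSO\textsubscript{2}: one must quantify over \emph{edge} sets, not only vertex sets, so edges have to be treated as first-class objects in the decomposition (this is why one uses Introduce-Edge nodes and why the quantifier rank of $\varphi$ forces the width of the boundary information to blow up as a tower of exponentials in $|\varphi|$). Keeping the class count finite despite nested quantifiers over sets of edges, and verifying that the constructed congruence really is preserved by Join nodes (where two partial colourings of the boundary must be amalgamated consistently), is the delicate part and is what makes the dependence $f(|\varphi|,\mathrm{tw}(G))$ non-elementary in general.
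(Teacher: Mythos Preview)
The paper does not prove this statement at all: Theorem~\ref{thm:Courcle} is quoted from the literature (with the citation \cite{Courcelle}) and used as a black box in the proof of Theorem~\ref{thm18::oddclouring}. So there is nothing to compare your argument against in this paper.

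That said, your outline is a faithful sketch of the standard proof of Courcelle's theorem. Normalising to a nice tree decomposition, building a Feferman--Vaught style congruence of finite index on $w$-boundaried graphs by induction on the formula, and then running the resulting finite tree automaton bottom-up is exactly the classical route. Your remarks about handling edge-set quantification via Introduce-Edge nodes and about accommodating the modular counting predicates by tracking residues are also correct, as is the observation that the dependence $f(|\varphi|,w)$ is in general non-elementary. As a high-level plan nothing is missing; of course, the actual verification that the congruence is preserved under Join and Forget, and the bookkeeping for nested set quantifiers, would need to be written out in full for a self-contained proof.
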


Notice that we use the variant with the counting monadic second order logic (CMSO\textsubscript{2}.). This logic has a non-standard predicate checking if the cardinality of a set is even.

By constructing CMSO\textsubscript{2} sentence deciding the problem and applying Courcelle's theorem we obtain the following theorem.

\begin{thm} \label{thm18::oddclouring}
	For a constant $k$ and a graph $G$ with bounded treewidth, the~problem of deciding whether the graph $G$ has a parity vertex colouring with $k$ colours is solvable in linear time with respect to the size of $G$. 
\end{thm}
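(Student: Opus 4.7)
The plan is to invoke Courcelle's theorem (Theorem~\ref{thm:Courcle}) by constructing a CMSO\textsubscript{2} sentence $\varphi_k$ whose length depends only on the constant $k$ and which satisfies $G \models \varphi_k$ iff $\chrom{G} \le k$. Once such a $\varphi_k$ is in hand, Theorem~\ref{thm:Courcle} evaluates it on $G$ in time $f(|\varphi_k|, \mathrm{tw}(G))\cdot n$, which is $O(n)$ because both $k$ and $\mathrm{tw}(G)$ are constant. A tree decomposition of bounded width of $G$ can be precomputed in linear time by Bodlaender's algorithm, so the ``given with its tree decomposition'' clause of Theorem~\ref{thm:Courcle} does not affect the overall runtime.

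The sentence I will build is
\[
\varphi_k \;\equiv\; \exists V_1 \cdots \exists V_k \subseteq V(G) \; \bigl[\, \mathrm{Partition}(V_1, \dots, V_k) \,\land\, \forall F \subseteq E(G) \bigl( \mathrm{IsPath}(F) \to \bigvee_{i=1}^{k} \neg \mathrm{Even}(V_i \cap V(F)) \bigr) \bigr],
\]
where the sets $V_i$ play the role of colour classes, $\mathrm{Partition}(V_1,\dots,V_k)$ is the standard first-order formula saying that each vertex of $G$ lies in exactly one $V_i$, and $V(F)$ abbreviates the set of endpoints of edges in $F$ (definable from $F$ via a fresh set variable). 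The disjunction uses the CMSO\textsubscript{2} predicate $\mathrm{Even}(\cdot)$ to state that some colour class contains an odd number of vertices of the path encoded by $F$. Paths of length $0$ are trivially fine (their unique colour appears once), so it is enough to let $F$ range over non-empty edge sets that induce paths.

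The only genuine technical step is writing the MSO\textsubscript{2} formula $\mathrm{IsPath}(F)$, holding exactly when $F$ is the edge set of a simple path in $G$. I will express it as the conjunction of three standard clauses: \textbf{(i)} \emph{connectedness} of $(V(F), F)$, stated by the usual MSO\textsubscript{2} formula forbidding a proper non-trivial bipartition of $V(F)$ with no $F$-edge across it; \textbf{(ii)} \emph{acyclicity} of $F$, which admits the clean CMSO\textsubscript{2} formulation ``no non-empty $F' \subseteq F$ has every vertex of even $F'$-degree''; and \textbf{(iii)} the first-order statement that no vertex of $G$ is incident to three distinct edges of $F$. Together these pin $(V(F), F)$ down to a tree of maximum degree at most $2$, i.e.\ a simple path. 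The hardest part, to the extent that anything is hard here, is the bookkeeping: fresh set variables are needed to realise expressions such as $V_i \cap V(F)$ as CMSO\textsubscript{2} set variables before $\mathrm{Even}$ is applied, and one must verify that the combined sentence faithfully expresses ``$G$ has a parity vertex colouring with at most $k$ colours''. Once this is done, $|\varphi_k| = O(k)$ is independent of $G$, and Theorem~\ref{thm:Courcle} yields the claimed linear-time algorithm.
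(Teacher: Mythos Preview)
Your proposal is correct and follows essentially the same approach as the paper: invoke Courcelle's theorem after building a CMSO\textsubscript{2} sentence that existentially quantifies the $k$ colour classes, universally quantifies over paths, and uses the $\mathrm{Even}$ predicate to check that some class meets the path in an odd number of vertices; Bodlaender's algorithm handles the tree-decomposition precomputation. The only cosmetic differences are that the paper quantifies paths via vertex sets (then existentially picks the edge set) and characterises them by ``two vertices of degree~$1$, the rest of degree~$2$, connected'', whereas you quantify via edge sets and use ``connected, acyclic, maximum degree~$\le 2$''; also, the paper records $|\varphi_k|$ as quadratic in $k$ (the naive $\mathrm{Partition}$ clause has $\binom{k}{2}$ disjointness conditions), so your $O(k)$ claim is slightly optimistic unless you encode partitioning more carefully, but this is immaterial since $k$ is constant.
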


\begin{proof}
	As we said we use Courcelle's theorem. Note that the requirement of a tree decomposition is not a problem because the problem of finding a tree decomposition of an input graph is known to be fixed-parameter tractable with respect to the treewidth of the input graph (concretely, it can be found in time $\mathrm{tw}(G)^{O(\mathrm{tw}(G)^3)} \cdot n$. See \cite{Decomp} for details). Therefore, it is sufficient to construct a CMSO\textsubscript{2} sentence with size dependent only on $k$ deciding the problem. We use variable $V$ for the set of vertices and $E$ for set of edges of the input graph.
	
	We will construct this sentence according to the following simple observation.
	There exists a proper colouring with $k$ colours if and only if we can partition vertices into $k$ sets according to their colours such that for every path in $G$ at least one of the partition sets has odd number of vertices in common with the path.
	
	This can be directly rewritten as a CMSO\textsubscript{2} sentence
	
	\begin{align*}
	\begin{longformulas}
	ParityColorable_k = \exists_{X_1, X_2, \dots , X_k \subseteq V} [Partition(X_1, \dots,X_k) \wedge \\
	\>\>\>\>\> \forall_{Y \subseteq V} Path(Y) \implies (Oddtimes(X_1,Y) \vee \\
	Oddtimes(X_2,Y) \vee \dots \vee Oddtimes(X_k,Y)) ]
	\end{longformulas}
	\end{align*}
	
	where $Path$, $Partition$ and $Oddtimes$ are auxiliary subformulas defined below expressing basic graph and colouring properties.
	
	\begin{align*}
	&\begin{longformulas}
	Partition(X_1, \dots,X_k) = \forall_{x\in V} [ (v\in X_1 \vee \dots \vee v \in X_k) \wedge  \\
	\>\>\> (v\notin X_1 \vee v\notin X_2) \wedge \dots \wedge (v\notin X_1 \vee v\notin X_k) \wedge\\
	(v\notin X_2 \vee v\notin X_3) \wedge \dots \wedge (v\notin X_2 \vee v\notin X_k) \wedge\\
	\dots \\
	(v\notin X_{k-1} \vee v\notin X_k)]
	\end{longformulas}
	\displaybreak[3]\\
	\\
	&\begin{longformulas}
	Path(X) = \exists_{Y\subseteq E}\ \exists_{x_1,x_2\in X} [x_1 \neq x_2 \wedge Conn(X,Y) \wedge Deg1(x_1,Y) \wedge \\
	\>\> Deg1(x_2,Y) \wedge \forall_{x \in X}( (x\neq x_1 \wedge x\neq x_2) \implies Deg2(x,Y))]
	\displaybreak[3]\\
	\\
	\>
	Conn(X,Y) = \forall_{A\subseteq X}[(\exists_{u\in X}\ u\in A \wedge \exists_{v\in X}\ v\notin A) \implies\\
	\>
	\implies (\exists_{e\in Y}\ \exists_{u,v \in X}(inc(e,u) \wedge inc(e,v) \wedge u\in A \wedge v\notin A))]
	\< \\
	\displaybreak[3]\\
	Deg1(x,Y) = \exists_{e_1\in Y}\ [ inc(e_1,x) \wedge \forall_{e\in Y}(e\neq e_1 \implies \neg inc(e,x))]
	\\	
	\\
	Deg2(x,Y) = \exists_{e_1,e_2 \in Y}\ [ inc(e_1,x) \wedge inc(e_2,x) \wedge e_1 \neq e_2 \wedge \\ 
	\> \forall_{e\in Y}((e\neq e_1 \wedge e\neq e_2) \implies \neg inc(e,x))]
	\\
	\\
	\end{longformulas}
	\displaybreak[3]\\
	&\begin{longformulas}
	Oddtimes(X,Y) = \exists_{A \subseteq X} [\neg Even(A) \wedge \\
	\>\>  \forall_{x\in X} \left((x\in A \implies x\in Y) \wedge (x\notin A \implies x\notin Y)\right)]\\
	\end{longformulas}
	\end{align*}	
	The  formula $Partition(X_1, \dots,X_k)$ expresses that the variables $X_1, \dots,X_k$ form a partition of $V$. It does so by ensuring that every vertex is in some partition set, and no vertex is in two partition sets.
	
	The formula $Path(X)$ expresses that vertices of $X$ form a path of at least 2 vertices. In particular, there must exist a subset $Y$ of edges such that one or two vertices in $X$ are vertices of exactly one edge from $Y$. The rest of the vertices must be vertices of exactly two edges from $Y$. Moreover, to really ensure that it is a path, $X$ and $Y$ must compose a connected graph. These properties are expresses by elementary subformulas $Conn(X,Y), Deg1(x,Y), Deg2(x,Y)$.
	
	The formula $Oddtimes(X,Y)$ checks if the intersection of $X$ and $Y$ has an odd size. It does so by using the predicate $Even$.
	
	Since the expanded sentence $ParityColorable_k$ has a size quadratic in $k$, Courcelle's theorem finishes the proof.
\end{proof}

\section*{Acknowledgement}
	I would like to thank Mgr. Petr Gregor, Ph.D. for many helpful discussions.

\bibliographystyle{abbrvnat}  
\renewcommand{\bibname}{Bibliography}
\bibliography{bibliography}

\end{document}